\DeclareMathOperator*{\argmax}{arg\,max}
\newcommand\restartchapters{\par
  \setcounter{chapter}{0}%
  \setcounter{section}{0}%
  \gdef\@chapapp{\chaptername}%
  \gdef\thechapter{\@arabic\c@chapter}}
\newtheorem{theorem}{Theorem}
\newtheorem{remark}{Remark}
\newtheorem{lemma}{Lemma}
\g@addto@macro\normalsize{%
 \setlength\abovedisplayskip{4pt}
 \setlength\belowdisplayskip{4pt}
 \setlength\abovedisplayshortskip{4pt}
 \setlength\belowdisplayshortskip{4pt}
}
\def\endthebibliography{%
	\def\@noitemerr{\@latex@warning{Empty `thebibliography' environment}}%
	\endlist
}
\begin{document}
\bstctlcite{IEEEexample:BSTcontrol}
\title{\huge Security and Reliability Analysis of Satellite-Terrestrial Multi-Relay Networks with Imperfect CSI}
\author{ \normalsize
\IEEEauthorblockN{$\text{Tan N. Nguyen}$,  $\textit{Member, IEEE}$, $\text{Dinh-Hieu Tran}, \textit{Graduate Student Member, IEEE}$, $\text{Trinh Van Chien}$,  $\textit{Member, IEEE}$, $\text{Van-Duc Phan}$, $\text{Miroslav Voznak}, \textit{Senior Member, IEEE}$, $\text{Symeon Chatzinotas}, \textit{Senior Member, IEEE},$ }
\thanks{Manuscript received XXX; revised XXX;
	accepted XXX. Date of publication XXX; date of
	current version XXX. This research was financially supported by Van Lang University, Vietnam, and was supported in part by the Ministry of Education, Youth and Sports of the Czech Republic under the grant SP2021/25 and e-INFRA CZ (ID:90140). (\textit{Corresponding author: Van-Duc Phan.})}
\thanks{Tan N. Nguyen is with the Communication and Signal Processing Research Group, Faculty of Electrical and Electronics Engineering, Ton Duc Thang University, Ho Chi Minh City, Vietnam. (e-mail:nguyennhattan@tdtu.edu.vn).}
\thanks{Dinh-Hieu Tran is  currently with Nokia Bell Labs, France, and was with the Interdisciplinary Centre for Security, Reliability and Trust (SnT), the University of Luxembourg, Luxembourg, (e-mail:dinh-hieu.tran@nokia.com).}
\thanks{Trinh Van Chien is with the School of Information and Communication Technology (SoICT), Hanoi University of Science and Technology (HUST), 100000 Hanoi, Vietnam, (e-mail: chientv@soict.hust.edu.vn).}
\thanks{Van-Duc Phan is at Faculty of Automotive Engineering, School of Engineering and Technology, Van Lang University, Ho Chi Minh City, Vietnam (email: duc.pv@vlu.edu.vn).}
\thanks{Miroslav Voznak is at VSB – Technical University of Ostrava, 17. listopadu 2172/15, 708 00 Ostrava, Czech Republic. (e-mail: miroslav.voznak@vsb.cz).}
\thanks{Symeon Chatzinotas is at the Interdisciplinary Centre for Security, Reliability and Trust (SnT), the University of Luxembourg, Luxembourg, (e-mail: symeon.chatzinotas@uni.lu).}
\thanks{\copyright~2021 IEEE. Personal use of this material is permitted. However, permission to use this material for any other purposes must be obtained from the IEEE by sending a request to pubs-permissions@ieee.org.}

}
\maketitle
\thispagestyle{empty}
\pagestyle{empty}
\begin{abstract}
This work investigates the security and reliability analysis for a novel satellite-terrestrial (SatTer) network. Specifically, a satellite attempts to transmit confidential information to a ground user (GU) via the support of multiple relay nodes in the presence of an eavesdropper that tries to overhear the information. A friendly jammer is deployed to improve the secure transmission between the satellite and the relays. Furthermore, satellite-to-relay generalized Rician fading channels and imperfect channel state information (CSI) are deployed to examine a general system model.  In this context, the closed-formed expressions for the outage probability (OP) and intercept probability (IP) are derived corresponding to an amplify-and-forward (AF)-based relaying scheme, which is challenging and has not been studied before. Finally, the exactness of the mathematical analyses is validated through Monte Carlo simulations. Furthermore, the effects of various key parameters (e.g., channel estimation errors, satellite's transmit power, relay's transmit power, number of relays, and fading severity parameter) are examined. 
\end{abstract}
\begin{IEEEkeywords}
Cooperative relay, imperfect CSI, physical layer security, satellite communications, shadowed Rician channel
\end{IEEEkeywords}

\section{Introduction} \label{Introduction}
Recently, the Internet of Things (IoT) has been spread worldwide due to its various applications such as smart cities, smart farming, wearable devices, healthcare, and smart communications \cite{PhuBack2021, TanIoT2021, Tran2020FDUAV}. Nevertheless, the explosive growth of the number of IoT devices (IoTDs) has brought new challenges to the traditional cellular networks due to its restricted resources, e.g., fixed locations, limited spectrum, and power \cite{XianglingSat2020, van2021user, 9795695}. Furthermore, IoTDs are usually distributed in harsh environments such as forests, deserts, mountains, or maritime. Therefore, it is more difficult for the current cellular networks to provide seamless and cost-efficient services to a massive amount of IoTDs in these scenarios \cite{FangSat6G}. Thanks to recent projects such as SpaceX and OneWeb, SPUTNIX, thousands of low earth orbit (LEO) satellites are being launched to provide global coverage and high throughput supporting the traditional terrestrial communication networks \cite{OneWeb,SpaceX,DengUltraSat}. Therefore, satellite communications have become a promising solution to overcome the limitations of IoT networks. 

Besides many advantages, satellite communications (SatComs) are not without limitations \cite{van2021user,bui2021robust}. Due to considerable distance, obstacles, and shadowing effects between satellite and ground users (GUs), Line-of-sight (LoS) links between them are not always guaranteed \cite{vanelli2007satellite}. Therefore, the hybrid satellite-terrestrial relay network (HSTRN) has been proposed to leverage the benefits of both terrestrial and space communications to improve network performance \cite{KefengTVT2020,QingquanTAES2020,ZengTWSat2019,PanSatUAV2020,SharmaSatUAV2021,HieuSat2022}. In \cite{KefengTVT2020}, the authors investigated the HSTRN by considering the direct link between satellite and destination, interference at relay and destination, and the hardware impairment (HI) effects. In \cite{QingquanTAES2020}, the authors studied the HSTRN by optimizing the beam-forming factors at the relay to maximize the total capacity. Moreover, they proposed two scheduling schemes, namely user fairness and best user scheduling. The work in \cite{ZengTWSat2019} analyzed the performance analysis of two-way HSTRN, where the transmission links between relay to satellite and relay to users follow $\kappa$-$\mu$ shadowed fading and Nakagami-m distribution, respectively. In contrast to \cite{KefengTVT2020,QingquanTAES2020} only considered a fixed terrestrial user as a relay, an unmanned aerial vehicle (UAV) has been proposed to act as a relay in \cite{tran2019Asilomar,PanSatUAV2020,tran2020globecom,SharmaSatUAV2021,HieuSat2022}. 

Beyond the in-depth analysis of satellite-terrestrial networks in the works mentioned above, they did not consider the security aspect of their systems. Due to the broadcast nature and wide coverage range of SatComs, the security and privacy of this communication system are not guaranteed. Conventional cryptography methods have been applied at the upper layers to improve information security. Nevertheless, these encryption methods require a high computational burden due to service management and key distribution \cite{bankey2019physical,kalantari2015secrecy}. In this context, physical layer security (PLS) can be considered a complementary solution, where it can achieve secure transmission by leveraging the characteristics of wireless channels \cite{bankey2019physical,GouSatPLS2021,SharmaUAVSatPLS2020,XiaokaiSatNOMA2021,li2021physical,li2021hardware,li2020q}. The authors in \cite{bankey2019physical} studied the PLS in HSTRN in which a multi-antenna satellite tried to communicate with multi-antenna destinations via the help of multi-antenna relays in the presence of an eavesdropper. Non-orthogonal multiple access (NOMA) and HI were investigated with HSTRN in \cite{GouSatPLS2021}. Moreover, the authors in \cite{GouSatPLS2021} studied two security schemes, i.e., colluding and non-colluding scenarios. In \cite{SharmaUAVSatPLS2020}, UAVs were played as a flying relay (or an eavesdropper) to transfer (or overhear) information from source to destination in HSTRN. In  \cite{XiaokaiSatNOMA2021}, the authors proposed a novel hybrid satellite-aerial-terrestrial networks
(HSATNs), where ground users could obtain content from a cache-enabled UAV or a NOMA-based satellite. Moreover, the authors derived the outage and hit probability of the considered system model with stochastic geometry. Regarding the relay/backscatter-aided terrestrial communications, the authors in \cite{li2021physical,li2021hardware,li2020q} derived the closed-form expression of the secrecy channel capacity, OP, and IP to demonstrate the system robustness under attacks. 

The previous works have addressed the different new challenges in HSTRN, such as HI, NOMA, interference, two-way relaying, and UAV-enabled relaying. Nevertheless, none of these works consider a jammer to improve the information secrecy. Recently, few research works have investigated PLS in HSTRN with jammer \cite{GaofengTVT2020,HanSatTVT2021,MouniaSat2021,ai2019physical,SaiSat2021}. The work in \cite{GaofengTVT2020} studied the secrecy problem in multi-beam satellite systems by applying the cooperative jamming method. Moreover, they proposed an alternating algorithm by jointly optimizing the power allocation and beamforming factors. The authors in \cite{HanSatTVT2021} investigated a satellite and UAV system under hostile jamming environments. The authors designed the UAV trajectory to avoid jamming signals, where the UAV performed reconnaissance tasks and transferred the collected data to the satellite. The work in \cite{MouniaSat2021} investigated physical layer security in cognitive satellite-terrestrial networks with jamming, where dual-hop communications were considered under RF and optical links in the first and second hop, respectively. The authors in \cite{ai2019physical} investigated the closed-form expressions of the secrecy channel capacity and OP for a single relay and perfect CSI. In \cite{SaiSat2021}, the work considered a new system model in which intelligent reflecting surface (IRS)  enabled PLS in satellite-terrestrial networks. 

Despite significant achievements in the above studies, most of the previous works have concentrated on resource allocation perspectives under the availability of instantaneous channel state information (CSI) \cite{van2021user,bui2021robust, QingquanTAES2020, tran2019Asilomar,PanSatUAV2020,tran2020globecom,SharmaSatUAV2021,HieuSat2022, GaofengTVT2020, HanSatTVT2021}. In contrast, there are only a few works on deriving the analytical results for satellite-terrestrial integrated networks \cite{KefengTVT2020,bankey2019physical,GouSatPLS2021,MouniaSat2021}, but only based on perfect CSI and not for the IP and OP under the physical security perspectives with a  friendly jammer.  Consequently, there is still room for research on security and reliability in satellite-terrestrial (SatTer) networks. In this paper, we present an amplify-and-forward (AF)-based SatTer network for secure communication in the presence of an eavesdropper.\footnote{In this paper, we select the AF relaying for analysis since it is easy to implement in practice due to its simple hardware with the only main cost of an amplifier. In contrast, a DF has costly hardware with at least one radio frequency chain on other supplementary components to decode and separate the noise and desired signal. A consideration of the DF relaying is left for future work.} The main contributions of this work are given as follows:

\begin{itemize}
	\item To the best of our knowledge, this is the first work that mathematically obtains the closed-form expressions for the IP and OP in AF-based SatTer networks in the presence of an eavesdropper with imperfect CSI and generalized shadowed Rician channels. In particular, a friendly jammer is considered to broadcast the artificial noise to decrease the eavesdropping ability of the illegitimate party. Furthermore, the closed-form expressions based on particular CSI offer low-cost designs to evaluate the IP and OP in practice.
	\item Monte Carlo simulations are performed to verify the correctness of mathematical analyses for various parameter settings. Moreover, the simulations also provide the trade-off between security and reliability in the considered system model.
	\item Insightful discussions on the impact of different key parameters of SatTer networks on its security-reliability trade-off are numerically provided. Specifically, the satellite's transmit power, number of relays, and relay transmit power can be suitably selected to reduce the eavesdropper's effects and enhance the system performance.
\end{itemize}

The remainder of the paper is organized as follows: The system model and problem formulation are given in Section~\ref{System_Model}. The derivation
of key performance metrics, including the OP and throughput of the proposed model, is presented in Section~\ref{sec:3}. Numerical results are shown in Section~\ref{sec:4}. Section~\ref{sec:5} concludes the paper.

\section{System Model} \label{System_Model}
We study the downlink (DL) scenario of the physical layer security in an HSTRN. The system model consists of one satellite S communicating with one destination D via the support of multiple relays ${\rm R}_n$ with $n \in \{1,\dots, N\}$. Moreover, due to the masking effect and severe shadowing, it does not exist a direct link between $\rm S \to \rm D$ \cite{BhatnagarSat2013,an2015performance,bankey2019physical}. Besides, an eavesdropper E tries to wiretap confidential information from relay $\rm{R}_n$ \cite{ha2020security,phan2021study,tran2019secrecy}. To reduce the eavesdropping ability of the illegitimate party, we deploy one friendly jammer J that broadcasts the artificial noise.\footnote{
Broadcasting the artificial noise is one solution for physical security as no prior information about the network is available. Accordingly, the energy consumption for a friendly jammer is necessary to boost the security of the legitimate destination. Some particular devices can be considered as friendly jammers to broadcast artificial noise, for instance, in military applications \cite{sharma2020artificial}. The legitimate destination knows the signals transmitted from the friendly jammer using a specific signal encoding structure \cite{gu2019secrecy}.} Notably, the jamming noise is known to the legitimate destination and it can be canceled, while it is unknown to the eavesdropper \cite{zou2016physical, MouniaSat2021}. 
\begin{figure}[t]
	\label{Fig1}
	\centering
	\includegraphics[height=5.2cm,width = 8cm]{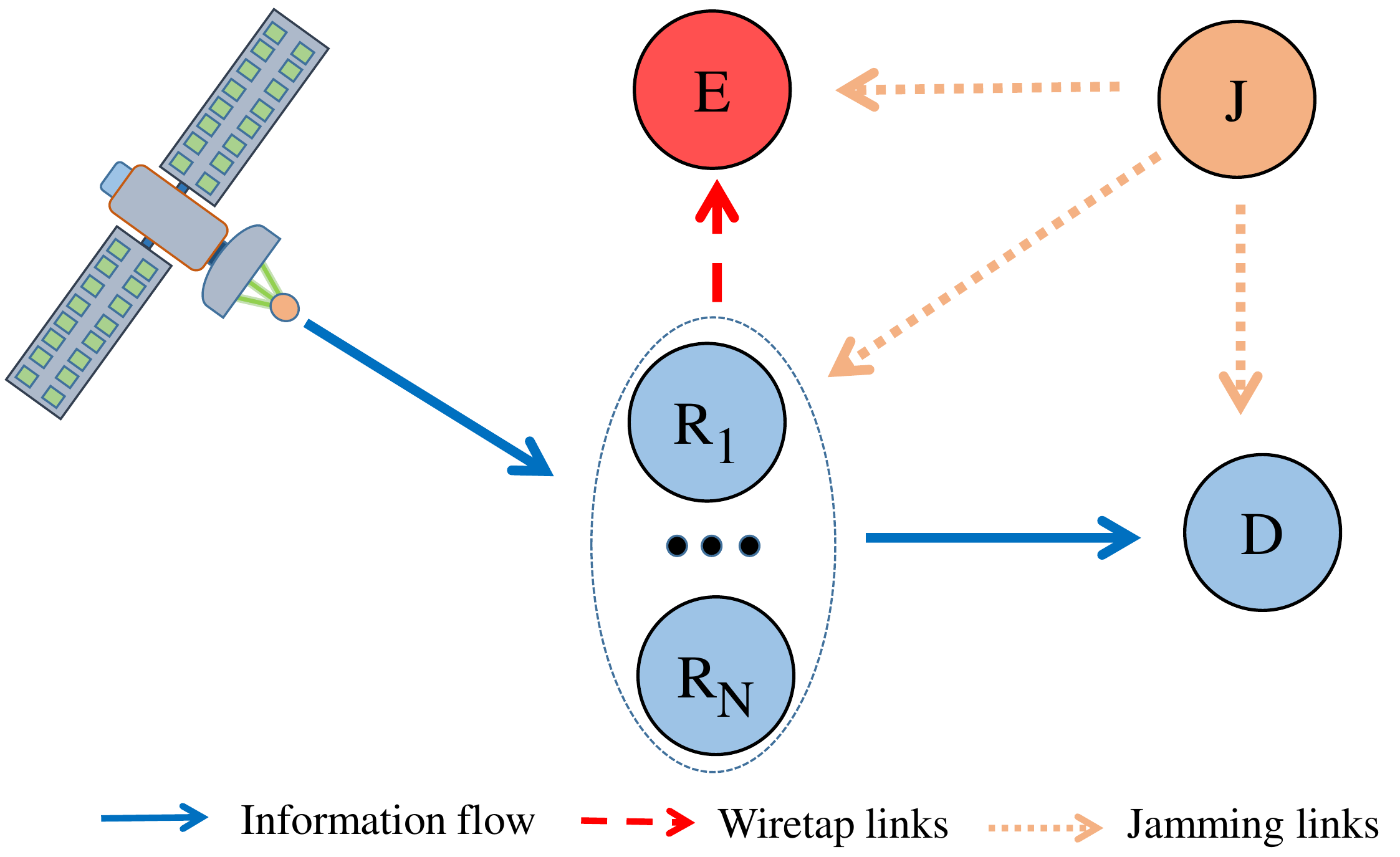}
	\caption{The considered system model}
\end{figure}
\subsection{Channel model}
In practice, it is difficult to obtain perfect channel state information (CSI) due to channel estimation errors (CEEs). Consequently, channel estimation methods, such as linear minimum mean square error (MMSE) channel estimator \cite{li2019security} are necessary to obtain CSI. Therefore, the channel can be modeled as 
\begin{align}
	\label{EQ1}
	{\mathord{\buildrel{\lower3pt\hbox{$\scriptscriptstyle\frown$}} 
\over h} _i} = {h_i} + {e_i},
\end{align}
where ${e_i}$ with $i \in \left( {{\rm{S}}{{\rm{R}}_n}{\rm{,}}{{\rm{R}}_n}{\rm{D}},{{\rm{R}}_n}{\rm{E}},{\rm{JE}},{\rm{J}}{{\rm{R}}_n}{\rm{,JD}}} \right)$ is the CEEs with ${e_i} \sim CN\left( {0,\mu _i^2} \right)$ and ${\mathord{\buildrel{\lower3pt\hbox{$\scriptscriptstyle\frown$}} \over h} _i}$ signifies the estimated channel of the real channel ${h_i}$. Without loss of generality, let us assume that the estimated channels between two GUs in terrestrial communications follow block Rayleigh fading, whereas channel coefficients are unchanged during one transmission block and vary over different blocks.\footnote{The channel model considered in this paper matches well with actual measurements from the propagation environments with rich scatterers around the receivers and without a dominant path \cite{9531522}. A more general model such as \cite{yacoub2007alpha} should be considered in a future work.} As a result, the channel gains can be represented as ${\gamma _{{{\rm{R}}_n}{\rm{E}}}} = {\left| {{h_{{{\rm{R}}_n}{\rm{E}}}}} \right|^2},{\gamma _{{{\rm{R}}_n}{\rm{D}}}} = {\left| {{h_{{{\rm{R}}_n}{\rm{D}}}}} \right|^2}$, etc. are exponential random variables (RVs). More specifically, the probability density function (PDF) and cumulative distribution function (CDF) are respectively given as \cite{nguyen2019performance,tin2021performance}:
\begin{align}
	\label{EQ2}
	{F_X}(x) = 1 - \exp ( - \lambda x),
\end{align}
\begin{align}
	\label{EQ3}
	{f_X}(x) = \frac{{\partial {F_X}(x)}}{{\partial x}} = \lambda \exp ( - \lambda x),
\end{align}
where $\lambda$ is the rate parameter of $X$. Next, we consider Shadowed-Rician fading model for the satellite links \cite{an2015performance}, which has been widely applied in different satellite services using the S-band, UHF-band, Ka-band, and L-band \cite{an2016secure}. 
Therein, the PDF of ${\gamma _{{\rm{S}}{{\rm{R}}_n}}} = {\left| {{h_{{\rm{S}}{{\rm{R}}_n}}}} \right|^2}$ between ${\rm S} \to {\rm R}_n$ is given by \cite{BhatnagarSat2013,bankey2019physical}
\begin{align}
	\label{EQ4}
	{f_{{\gamma _{{\rm{S}}{{\rm{R}}_n}}}}}(x) = {\alpha _n}\exp \left( { - {\beta _n}x} \right){}_1{F_1}\left( {{m_n};1;{\delta _n}x} \right),  x \ge 0,
\end{align}
where ${\alpha _n} \triangleq {\beta _n}{\left( {\frac{{2{b_n}{m_n}}}{{2{b_n}{m_n} + {\Omega _n}}}} \right)^{{m_n}}},$ ${\beta _n} \triangleq  \frac{1}{{2{b_n}}},$ ${\delta _n} \triangleq \frac{{{\beta _n}{\Omega _n}}}{{2{b_n}{m_n} + {\Omega _n}}}$, with ${\Omega _n}$ and $2{b_n}$ denote the average power of LoS and multi-path components at relay ${\rm R}_n$, respectively. Moreover, ${m_n}$ implies the fading severity parameter at relay ${\rm R}_n$ and ${}_1{F_1}\left( {.;.;.} \right)$ denotes the confluent hyper-geometric function of the first kind \cite[Eq.9.210.1]{jeffrey2007table}. For arbitrary integer-valued fading severity parameters, one can simplify ${}_1{F_1}\left( {{m_n};1;{\delta _n}x} \right)$ in \eqref{EQ3} to express the PDF of ${\gamma _{{\rm{S}}{{\rm{R}}_n}}}$ as \cite{upadhyay2015max}
\begin{align}
	\label{EQ5}
	{f_{{\gamma _{{\rm{S}}{{\rm{R}}_n}}}}}(x) = {\alpha _n}\sum\limits_{k = 0}^{{m_n} - 1} {{\zeta _n}(k)} {x^k}\exp \left( { - \left( {{\beta _n} - {\delta _n}} \right)x} \right),
\end{align}
where the following definition holds
\begin{equation} \label{eq:zetan}
{\zeta _n}(k) = \frac{{{{( - 1)}^k}{{\left( {1 - {m_n}} \right)}_k}\delta _n^k}}{{{{\left( {k!} \right)}^2}}},
\end{equation}
 and ${\left(  .\right)_k}$ is the Pochhammer symbol \cite[Eq. p. xliii]{jeffrey2007table}. The corresponding CDF denoted by ${F_{{\gamma _{{\rm{S}}{{\rm{R}}_n}}}}}(x)$ is formulated as
 \begin{equation}
 {F_{{\gamma _{{\rm{S}}{{\rm{R}}_n}}}}}(x) = \Pr \left( {{\gamma _{{\rm{S}}{{\rm{R}}_n}}} < x} \right) = \int\limits_0^x {{f_{{\gamma _{{\rm{S}}{{\rm{R}}_n}}}}}(x)dx},
 \end{equation}
which can be obtained in closed form as follows:
\begin{align}
	\label{EQ6}
&{F_{{\gamma _{{\rm{S}}{{\rm{R}}_n}}}}}(x) = 1 - {\alpha _n}\sum\limits_{k = 0}^{{m_n} - 1} {\sum\limits_{p = 0}^k {\frac{{{\zeta _n}(k)k!}}{{p!}}} } {\vartheta_1^{ - A_1 }}  {x^p}\exp \left( { - \vartheta_1 x} \right) \notag\\
& = 1 - {\alpha _n}\sum\limits_{k = 0}^{{m_n} - 1} \sum\limits_{p = 0}^k {\frac{{{{( - 1)}^k}{{\left( {1 - {m_n}} \right)}_k}\delta _n^k}}{{k!p!}}} { \vartheta_1^{ - A_1}} {x^p}\exp \left( { - \vartheta_1 x} \right),
\end{align}
where $A_1 \triangleq k+1-p,$ $\vartheta_1 \triangleq \beta_n - \delta_n$. The first equality in \eqref{EQ6} is obtained by applying  \cite[Eq.3.351.1]{jeffrey2007table}, while the second equality in \eqref{EQ6} is achieved by utilizing the detail expression of $\zeta_n (k)$ defined in \eqref{eq:zetan} and then doing some algebra.  We hereafter exploit the fundamental achievement in \eqref{EQ6} to work on the analysis of the IP and OP.
\subsection{Information transmission}
In this subsection, we describe the AF relaying protocol that is used to forward the data from ${\rm S} \to {\rm R}_n$. Because all GUs work in the half-duplex mode, the signal transmission from ${\rm S} \to {\rm D}$ through two-time phases \cite{Tan2020Compnet,nguyen2020outage}. In the first one, S transmits its data ${x_{\rm{S}}}$ to the relay ${\rm{R}}_n$. Therefore, the received signal at ${\rm{R}}_n$ can be expressed as:
\begin{align}
	\label{EQ7}
	{y_{{{\rm{R}}_n}}} = {\mathord{\buildrel{\lower3pt\hbox{$\scriptscriptstyle\frown$}} 
\over h} _{{\rm{S}}{{\rm{R}}_n}}}{x_{\rm{S}}} + {n_{{{\rm{R}}_n}}} = \left( {{h_{{\rm{S}}{{\rm{R}}_n}}} + {e_{{\rm{S}}{{\rm{R}}_n}}}} \right){x_{\rm{S}}} + {n_{{{\rm{R}}_n}}},
\end{align}
where ${x_{\rm{S}}}$ is the symbol signal with ${\rm E}\{ {{{| {{x_{\rm{S}}}} |}^2}} \} = {P_{\rm{S}}}$, ${\rm E}\left\{  \cdot  \right\}$ denotes the expectation operation; ${n_{{{\rm{R}}_n}}}$ is the zero mean additive white Gaussian noise (AWGN) with variance ${N}_0$ at relay ${\rm R}_n$. 

In the second phase, relay ${\rm R}_n$ first amplifies received signal ${y_{{{\rm{R}}_n}}}$ with scale parameter G and then relays it to the destination D. At the same time, the eavesdropper can detect the transmitted signal from ${\rm R}_n$ and try to overhear the confidential information. The cooperative jamming technique can be used to reduce the eavesdropping link's quality. Specifically, the single antenna friendly jammer (J) is employed to continuously generate artificial noises to the eavesdropper. Furthermore, the relays and destination are the legitimate users and are assumed to be known the coding sequence of the jammer. Therefore, they can cancel the jammer's interference. Furthermore, it is assumed that the eavesdropper cannot intercepts signal from $\rm S$ due to the masking effect or severe fading. Therefore, the  eavesdropper monitors only transmitted signal from relay ${\rm R}_n$ \cite{huang2018secrecy,bankey2019physical}. Consequently, the received signal at the destination $\rm D$ and the eavesdropper $\rm E$  are respectively given as
\begin{align}
	\label{EQ8}
{y_{\rm{D}}} &= \left( {{h_{{{\rm{R}}_n}{\rm{D}}}} + {e_{{{\rm{R}}_n}{\rm{D}}}}} \right){\rm{G}}{{\rm{y}}_{{{\rm{R}}_n}}} + {n_{\rm{D}}}\notag\\
& = \big( {{h_{{{\rm{R}}_n}{\rm{D}}}} + {e_{{{\rm{R}}_n}{\rm{D}}}}} \big){\rm{G}}\big[ \big( {{h_{{\rm{S}}{{\rm{R}}_n}}} + {e_{{\rm{S}}{{\rm{R}}_n}}}} \big){x_{\rm{S}}}   +{n_{{{\rm{R}}_n}}} \big] + {n_{\rm{D}}},\\
	\label{EQ9}
{y_{\rm{E}}} &= \left( {{h_{{{\rm{R}}_n}{\rm{E}}}} + {e_{{{\rm{R}}_n}{\rm{E}}}}} \right){\rm{G}}\big[ \left( {{h_{{\rm{S}}{{\rm{R}}_n}}} + {e_{{\rm{S}}{{\rm{R}}_n}}}} \right){x_{\rm{S}}} \notag\\ & + {n_{{{\rm{R}}_n}}} \big]  + \big( {{h_{{\rm{JE}}}} + {e_{{\rm{JE}}}}} \big){x_{\rm{J}}} + {n_{\rm{E}}},
\end{align}
where ${n_{\rm{D}}}$ and ${n_{\rm{E}}}$ are the zero mean AWGN with variance $N_0$ at ${n_{\rm{D}}}$ and ${n_{\rm{E}}}$, respectively; ${x_{\rm{J}}}$ the transmit signal at jammer $\rm{J}$ and ${\rm E}\left\{ {{{\left| {{x_{\rm{J}}}} \right|}^2}} \right\} = {P_{\rm{J}}}$. Based on \eqref{EQ7}, the scale parameter G can be determined as
\begin{align}
	\label{EQ10}
	{\rm{G = }}\sqrt {\frac{{{P_{\rm{R}_n}}}}{{{P_{\rm{S}}}\left( {{\gamma _{{\rm{S}}{{\rm{R}}_n}}} + \mu _{{\rm{S}}{{\rm{R}}_n}}^2} \right) + {N_0}}}}.
\end{align}
where $P_{\mathrm{R}_n}$ is  the transmit power of $\mathrm{R}_n$. For simplicity, we assume that CEEs values at each S-${\rm{R}}_n$ or ${\rm{R}}_n$-D or ${\rm{R}}_n$-E links are the same, i.e., $\mu _{{\rm{S}}{{\rm{R}}_1}}^2 = \mu _{{\rm{S}}{{\rm{R}}_2}}^2 = \dots = \mu _{{\rm{S}}{{\rm{R}}_n}}^2 = \mu _{{\rm{SR}}}^2,\forall n \in \left( {1,2,\dots,N} \right)$, etc. From \eqref{EQ8}, \eqref{EQ9} and \eqref{EQ10}, the signal to noise ratio (SNR) at D and eavesdropper ${\rm E}$ can be given as
\begin{align}
	\label{EQ11}
	{\gamma _{\rm{D}}} &= \frac{{\Psi \Phi {\gamma _{{\rm{S}}{{\rm{R}}_n}}}{\gamma _{{{\rm{R}}_n}{\rm{D}}}}}}{{{\gamma _{{{\rm{R}}_n}{\rm{D}}}}\Phi \vartheta_2 + {\gamma _{{\rm{S}}{{\rm{R}}_n}}}\Psi \left( {\mu _{{\rm{RD}}}^2\Phi  + 1} \right) + \Xi }},\\
	\label{EQ12}
	{\gamma _{\rm{E}}} &= \frac{{\Psi \Phi {\gamma _{{\rm{S}}{{\rm{R}}_n}}}{\gamma _{{{\rm{R}}_n}{\rm{E}}}}}}{{{\gamma _{{{\rm{R}}_n}{\rm{E}}}}\Phi \vartheta_2 + {\gamma _{{\rm{S}}{{\rm{R}}_n}}}\Psi \Lambda_1 + {\gamma _{{\rm{JE}}}}\Theta \vartheta_2  + \Lambda }},
\end{align}
where $\Psi \triangleq \frac{{{P_{\rm{S}}}}}{{{N_0}}},$ $\Xi \triangleq \left( {\Psi \mu _{{\rm{SR}}}^2 + 1} \right)\left( {\Phi \mu _{{\rm{RD}}}^2 + 1} \right),$ $\Phi \triangleq \frac{{{P_{\rm{R}}}}}{{{N_0}}},$ $	\vartheta_2 \triangleq \left( {\Psi \mu _{{\rm{SR}}}^2 + 1} \right),$ $\Theta  \triangleq \frac{{{P_{\rm{J}}}}}{{{N_0}}},$ $\Lambda  \triangleq \left( {\Psi \mu _{{\rm{SR}}}^2 + 1} \right)\left( {\Phi \mu _{{\rm{RE}}}^2 + \Theta \mu _{{\rm{JE}}}^2 + 1} \right),$ $\Lambda_1  \triangleq \left( {\mu _{{\rm{RE}}}^2\Phi  + \Theta \mu _{{\rm{JE}}}^2 + \Theta {\gamma _{{\rm{JE}}}} + 1} \right).$ In \eqref{EQ11}, we assume that the legitimate destination knows the coding sequence of the friendly jammer, they can exploit the same mechanism to generate artificial noise , e.g., using a standard hardware with the same random seed \cite{zou2016physical,MouniaSat2021} or the orthogonality between artificial noise and legitimate channel \cite{gu2019secrecy}.
\begin{remark}
In this work, we apply the partial relay selection (PRS) method. Specifically, we propose a sub-optimal relay selection protocol in which the best relay $a$-th is selected, as follows:\footnote{The partial relay selection in \eqref{EQ13} is sufficiently simple to implement in practice where the requirements of low latency are needed. However, it is a sub-optimal selection, and a better one is left for future work.}
\begin{align}
	\label{EQ13}
	a &= \underbrace {\argmax }_{n = 1,2,\dots,N}\left\{ {{\gamma _{{\rm{S}}{{\rm{R}}_n}}}} \right\}  \Leftrightarrow {\gamma _{{\rm{S}}{{\rm{R}}_a}}} = \underbrace {\max }_{n = 1,2,\dots,N}\left\{ {{\gamma _{{\rm{S}}{{\rm{R}}_n}}}} \right\}.
\end{align}
\end{remark}
To the best of our knowledge, all mentioned works in the literature, for example,  \cite{bankey2019physical,upadhyay2015max}, only considered identical independent distribution (i.i.d.) Shadowed-Rician fading channels with the same parameters, i.e., $m_n$, $\Omega_n$, and $b_n$, between ${\rm S} \to {\rm R}_n, \forall n = \{1,\dots,N\}$. Motivated by these observations, in this work, we investigate the generalized i.i.d. Shadowed-Rician channel, where S can transmit its information to each relay with different parameters. Alternatively, our framework is a generic version of the previous works. The CDF of ${\gamma _{{\rm{S}}{{\rm{R}}_a}}}$ can be computed as:
\begin{align}
	\label{EQ14}
	{F_{{\gamma _{{\rm{S}}{{\rm{R}}_a}}}}}(x) &= \Pr \left( {\underbrace {\max }_{n = 1,2,...,N}\left\{ {{\gamma _{{\rm{S}}{{\rm{R}}_n}}}} \right\} < x} \right) = \prod\limits_{n = 1}^N {{F_{{\gamma _{{\rm{S}}{{\rm{R}}_n}}}}}(x)}.
\end{align}
In order for further processing, by substituting \eqref{EQ6} into \eqref{EQ14}, it yields the following equality
\begin{align}
	\label{EQ15}
	{F_{{\gamma _{{\rm{S}}{{\rm{R}}_a}}}}}(x) =& \prod\limits_{n = 1}^N \Bigg( 1 - {\alpha _n}\sum\limits_{k = 0}^{{m_n} - 1} {\sum\limits_{p = 0}^k {\frac{{{{( - 1)}^k}{{\left( {1 - {m_n}} \right)}_k}{{\left( {{\delta _n}} \right)}^k}}}{{k!p!}}} } \notag\\ &\vartheta_1^{ - \left( {k + 1 - p} \right)} \times {x^p}\exp \big( { - \vartheta_1 x} \big) \Bigg).
\end{align}
Let us denote  the second term of the product in \eqref{EQ15} as
${x_n} = \sum\limits_{k = 0}^{{m_n} - 1} \sum\limits_{p = 0}^k \frac{{{\alpha _n}{{( - 1)}^k}{{\left( {1 - {m_n}} \right)}_k}{{\left( {{\delta _n}} \right)}^k}}}{{k!p!{\vartheta_1^{A_1}}}}  {x^p}\exp \left( { - \vartheta_1 x} \right)$, then we obtain the following result
\begin{align}
	\label{EQ16}
{F_{{\gamma _{{\rm{S}}{{\rm{R}}_a}}}}}(x) &= \prod\limits_{n = 1}^N {\left( {1 - {x_n}} \right)} 
\notag\\ & = 1 + \sum\limits_{n = 1}^N {\frac{{{{( - 1)}^n}}}{{n!}}} \underbrace {\sum\limits_{{n_1} = 1}^N \dots \sum\limits_{{n_n} = 1}^N {} }_{{n_1} \ne {n_2} \dots \ne {n_n}}\prod\limits_{t = 1}^n {{x_{{n_t}}}}.
\end{align}
We note that the result in \eqref{EQ16} is obtained from \eqref{EQ15} by  induction. From the obtained result in \eqref{EQ16}, we further manipulate the expression ${x_{{n_t}}}$ follows:
\begin{align}
	\label{EQ17}
	&\prod\limits_{t = 1}^n {{x_{{n_t}}}}  = \sum\limits_{{k_1} = 0}^{{m_{{n_t}}} - 1} \sum\limits_{{p_1} = 0}^{{k_1}} \dots\sum\limits_{{k_n} = 0}^{{m_{{n_t}}} - 1} \sum\limits_{{p_n} = 0}^{{k_n}} \prod\limits_{t = 1}^n {\alpha _{{n_t}}} \times\notag\\
&\frac{{{{( - 1)}^{{k_t}}}{{\left( {1 - {m_{{n_t}}}} \right)}_{{k_t}}}{{\left( {{\delta _{{n_t}}}} \right)}^{{k_t}}}}}{{{k_t}!{p_t}!{{\vartheta_3}^{A_3}}}}   {x^{A_2 }}\exp \left( { - \sum\limits_{t = 1}^n {\vartheta_3 x} } \right)    ,
\end{align}
where $A_2 \triangleq \sum\limits_{t = 1}^n {{p_t}}, $
$A_3 \triangleq k_t + 1 - p_t,$ $\vartheta_3 \triangleq  {{\beta _{{n_t}}} - {\delta _{{n_t}}}}$. After that, the closed-form expression of the  CDF of ${\gamma _{{\rm{S}}{{\rm{R}}_a}}}$ is given in Lemma~\ref{lemmaF}.
\begin{lemma}\label{lemmaF}
	Based on \eqref{EQ16} and \eqref{EQ17}, ${F_{{\gamma _{{\rm{S}}{{\rm{R}}_a}}}}}(x)$ can be expressed as
	\begin{align}
		\label{EQ18}
		&{F_{{\gamma _{{\rm{S}}{{\rm{R}}_a}}}}}(x) = 1 + \sum\limits_{n = 1}^N \frac{{{{( - 1)}^n}}}{{n!}}\sum\limits_{}^* {\alpha _{{n_t}}} \notag\\&\times \frac{{{{( - 1)}^{{k_t}}}{{\left( {1 - {m_{{n_t}}}} \right)}_{{k_t}}}{{\left( {{\delta _{{n_t}}}} \right)}^{{k_t}}}}}{{{k_t}!{p_t}!{\vartheta_3^{A_3}}}}  {x^{A_2}}\exp \left( { - \sum\limits_{t = 1}^n {\vartheta_3 x} } \right) ,
	\end{align}
	where the summation notation $\sum\limits_{}^*$ is defined as
	\begin{align}
		\sum\limits_{}^*  =  \underbrace {\sum\limits_{{n_1} = 1}^N \dots\sum\limits_{{n_n} = 1}^N {} }_{{n_1} \ne {n_2}\dots \ne {n_n}}\sum\limits_{{k_1} = 0}^{{m_{{n_t}}} - 1} {\sum\limits_{{p_1} = 0}^{{k_1}} {\dots\sum\limits_{{k_n} = 0}^{{m_{{n_t}}} - 1} {\sum\limits_{{p_n} = 0}^{{k_n}} {\prod\limits_{t = 1}^n . } } } }
	\end{align}
\end{lemma}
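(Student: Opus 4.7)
The plan is to obtain Lemma~\ref{lemmaF} by a direct substitution of the expanded product \eqref{EQ17} into the symmetric-polynomial expansion \eqref{EQ16}, followed by a consolidation of nested summations into the single shorthand $\sum^{*}$. The only nontrivial ingredient beyond \eqref{EQ15} is the inclusion--exclusion identity underlying \eqref{EQ16}, namely
\begin{equation}
\prod_{n=1}^{N}(1-x_{n}) \;=\; 1 + \sum_{n=1}^{N}\frac{(-1)^{n}}{n!}\underbrace{\sum_{n_{1}=1}^{N}\cdots\sum_{n_{n}=1}^{N}}_{n_{1}\ne\cdots\ne n_{n}}\prod_{t=1}^{n}x_{n_{t}},
\end{equation}
in which the factor $1/n!$ compensates for summing over ordered tuples of distinct indices rather than over unordered subsets of size $n$. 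This identity can be verified by a simple induction on $N$ or, equivalently, by collecting the elementary symmetric polynomials in the multilinear expansion of the product; the paper already invokes it to pass from \eqref{EQ15} to \eqref{EQ16}.

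First I substitute \eqref{EQ17} in place of $\prod_{t=1}^{n} x_{n_{t}}$ inside \eqref{EQ16}. Each factor $x_{n_{t}}$ contributes two finite summations, one over $k_{t}\in\{0,\ldots,m_{n_{t}}-1\}$ and one over $p_{t}\in\{0,\ldots,k_{t}\}$, whose ranges depend on $n_{t}$ but not on any other $n_{t'}$. Because these $2n$ inner summations commute with one another and with the outer ordered distinct-index sum $\sum_{n_{1}\ne\cdots\ne n_{n}}$, I can pull them all to the front and bundle them, together with the final product $\prod_{t=1}^{n}$, into the single operator $\sum^{*}$ defined in the lemma. This step is purely a rewriting of summation order.

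Next I track the summand carried through this reorganization. The product of the $n$ contributions from \eqref{EQ17} yields the combined algebraic factor
\begin{equation}
\prod_{t=1}^{n}\frac{\alpha_{n_{t}}(-1)^{k_{t}}(1-m_{n_{t}})_{k_{t}}(\delta_{n_{t}})^{k_{t}}}{k_{t}!\,p_{t}!\,\vartheta_{3}^{A_{3}}},
\end{equation}
with the convention, inherited from \eqref{EQ17}, that $\vartheta_{3}\triangleq\beta_{n_{t}}-\delta_{n_{t}}$ and $A_{3}\triangleq k_{t}+1-p_{t}$ are implicitly indexed by $t$ through $n_{t}$. Combining the $x^{p_{t}}$ factors and the exponentials across $t$ then produces the power $x^{A_{2}}$ with $A_{2}\triangleq\sum_{t=1}^{n}p_{t}$ and the joint exponential $\exp\!\big(-\sum_{t=1}^{n}\vartheta_{3}\,x\big)$, which matches the statement of Lemma~\ref{lemmaF} verbatim after inserting the outer $\sum_{n=1}^{N}(-1)^{n}/n!$.

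The main obstacle is bookkeeping rather than analysis: the symbols $\vartheta_{3}$, $A_{3}$, $\alpha_{n_{t}}$, $m_{n_{t}}$, $\delta_{n_{t}}$ all depend implicitly on the dummy index $t$ through $n_{t}$, and $\sum^{*}$ compresses six nested layers of summation together with an outer product of $t$-dependent terms. Care is therefore needed to ensure that every occurrence inside $\sum^{*}$ is interpreted under $\prod_{t=1}^{n}$ with the correct $t$-labelling. Once this indexing discipline is observed, the lemma follows by a single substitution of \eqref{EQ17} into \eqref{EQ16} and a relabelling of the combined summation as $\sum^{*}$, with no further analytic manipulation required.
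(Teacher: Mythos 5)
Your proposal is correct and follows essentially the same route as the paper, which obtains \eqref{EQ18} by direct substitution of \eqref{EQ17} into the inclusion--exclusion expansion \eqref{EQ16} (justified there by induction) and then compresses the nested summations into the shorthand $\sum^{*}$. Your additional remarks on the $1/n!$ factor for ordered distinct tuples and on the implicit $t$-dependence of $\vartheta_3$, $A_3$, and $\alpha_{n_t}$ merely make explicit the bookkeeping the paper leaves tacit.
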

The preliminary result on the CDF of ${\gamma _{{\rm{S}}{{\rm{R}}_a}}}$ visualizes the influences of all the available relays to the partial relay selection and the SNR value. We hereafter utilize the key result in \eqref{EQ18} to analyze the IP and OP.
\vspace*{-0.2cm}
\section{Performance Analysis}
This section derives the analytical results of the OP and IP with insightful observations.
\vspace*{-0.2cm}
\subsection{Outage Probability (OP) Analysis}
\label{sec:3}
For a given maximum SNR value, the OP of system can be defined as:
\begin{align} 
		\label{EQ19}
		{\rm{OP}} = \Pr \left( {{\gamma _{\rm{D}}} < {\gamma _{th}}} \right),
	\end{align} 	
where ${\gamma _{th}} = {2^{2{C_{th}}}} - 1$, ${C_{th}}$ is the predefined threshold of the system, and $\Pr(\cdot)$ is the probability of an event.  
\begin{theorem}
	\label{theorem_1}
	The analytical expression of the OP can be mathematically represented as
	\begin{align}
		\label{EQ24}
		{\rm{OP}} &= 1 + \sum\limits_{n = 1}^N \frac{{{{( - 1)}^n}}}{{n!}}\notag\\ & \sum\limits_{}^ *  {{\alpha _{{n_t}}}\frac{{2{{( - 1)}^{{k_t}}}{{\left( {1 - {m_{{n_t}}}} \right)}_{{k_t}}}{{\left( {{\delta _{{n_t}}}} \right)}^{{k_t}}}{\lambda _{{\rm{RD}}}}}}{{{k_t}!{p_t}!{\vartheta_3^{A_3}}}}}  \notag \\
		& \times {\left( {\frac{{{\gamma _{th}}}}{\Psi }} \right)^{A_2 }}\exp \left( { - \sum\limits_{t = 1}^n {\vartheta_3\frac{{{\gamma _{th}}\vartheta_2}}{\Psi }}  - {\lambda _{{\rm{RD}}}}{{\tilde \gamma }_{th}}} \right) 
	\end{align}
\begin{align*}
		& \times \sum\limits_{q = 0}^{A_2} {\left( \begin{array}{l}
				A_2 \\
				q
			\end{array} \right)} {\vartheta_2^{A_2 - q}}{{\tilde \Xi }^q}{\left( {\frac{{\sum\nolimits_{t = 1}^n {\vartheta_3{\gamma _{th}}\tilde \Xi } }}{{{\lambda _{{\rm{RD}}}}\Psi }}} \right)^{\frac{{ - q + 1}}{2}}} \notag\\ &{K_{ - q + 1}}\left( {2 \sqrt {\sum\limits_{t = 1}^n {\vartheta_3 \frac{{{\gamma _{th}}{\lambda _{{\rm{RD}}}}\tilde \Xi }}{\Psi }} } } \right),
	\end{align*}
	where ${K_v}\left(  \cdot  \right)$ is the modified Bessel function of the second kind and v-th order.
\end{theorem}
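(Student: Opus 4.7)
My plan is to write the outage event in terms of the two independent random variables $X\triangleq\gamma_{\mathrm{SR}_a}$ and $Y\triangleq\gamma_{\mathrm{R}_n\mathrm{D}}$, apply Lemma~\ref{lemmaF} for $F_X$, and evaluate the remaining one-dimensional integral via a Bessel-function identity. Rearranging \eqref{EQ11} shows that $\{\gamma_{\mathrm{D}}<\gamma_{th}\}$ holds automatically whenever $\Psi X\le\gamma_{th}\vartheta_2$, and on the complement is equivalent to $Y<g(X)$ for a strictly decreasing rational function $g(x)=\tilde\gamma_{th}+\tilde\Xi/(x-x_0)$, with $x_0\triangleq\gamma_{th}\vartheta_2/\Psi$ and $\tilde\gamma_{th},\tilde\Xi$ the asymptotic value and residue obtained by partial fractions. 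Since $Y$ is exponential with rate $\lambda_{\mathrm{RD}}$, combining the two cases collapses $\mathrm{OP}$ to the single integral
\[\mathrm{OP}=1-\int_{x_0}^{\infty}e^{-\lambda_{\mathrm{RD}}g(x)}f_X(x)\,dx.\]

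The next step is to integrate by parts with $u=e^{-\lambda_{\mathrm{RD}}g(x)}$ and $dv=f_X(x)\,dx$, so that $F_X$ rather than $f_X$ enters the remaining integral, avoiding a messy termwise differentiation of \eqref{EQ18}. Using $g(x_0^{+})=\infty$ and $g(\infty)=\tilde\gamma_{th}$, the boundary term equals $e^{-\lambda_{\mathrm{RD}}\tilde\gamma_{th}}$. Splitting $F_X=1+R$ as in \eqref{EQ18}, the ``$1$''-piece produces the elementary integral $\int_{x_0}^{\infty}(x-x_0)^{-2}e^{-\lambda_{\mathrm{RD}}g(x)}dx=e^{-\lambda_{\mathrm{RD}}\tilde\gamma_{th}}/(\lambda_{\mathrm{RD}}\tilde\Xi)$, evaluated in one line by the substitution $s=\lambda_{\mathrm{RD}}\tilde\Xi/(x-x_0)$. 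Multiplied by the factor $\lambda_{\mathrm{RD}}\tilde\Xi$ coming from $-g'(x)=\tilde\Xi/(x-x_0)^2$, this exactly cancels the boundary, and only the ``$R$''-piece survives, yielding the clean $1+\sum_n\frac{(-1)^n}{n!}\sum^{*}(\cdots)$ skeleton of \eqref{EQ24}.

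To finish, I would substitute $t=x-x_0$ in the surviving integral $\int_{x_0}^{\infty}x^{A_2}(x-x_0)^{-2}\exp(-(\sum_t\vartheta_3)x-\lambda_{\mathrm{RD}}g(x))\,dx$. This pulls out the constant factor $\exp(-\sum_t\vartheta_3\gamma_{th}\vartheta_2/\Psi-\lambda_{\mathrm{RD}}\tilde\gamma_{th})$, and a binomial expansion of $(t+x_0)^{A_2}=\sum_{q=0}^{A_2}\binom{A_2}{q}x_0^{A_2-q}t^{q}$ produces the outer $q$-sum. Each inner integral is of the canonical form $\int_0^{\infty}t^{q-2}e^{-\alpha t-\beta/t}\,dt$ with $\alpha=\sum_t\vartheta_3$ and $\beta=\lambda_{\mathrm{RD}}\tilde\Xi$, which by \cite[Eq.~3.471.9]{jeffrey2007table} equals $2(\beta/\alpha)^{(q-1)/2}K_{q-1}(2\sqrt{\alpha\beta})$. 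Using $K_{q-1}=K_{-q+1}$ together with $x_0^{A_2-q}=(\gamma_{th}/\Psi)^{A_2-q}\vartheta_2^{A_2-q}$ then reassembles every factor of \eqref{EQ24}.

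The main obstacle is combinatorial rather than analytic: one must (i) engineer the integration by parts so that the boundary term cancels against the ``$1$''-piece of $F_X$, producing the clean ``$1+\cdots$'' shape, and (ii) carefully thread the indices $(n,n_t,k_t,p_t,q)$ through the binomial expansion and the $t=x-x_0$ substitution so that the final powers of $\tilde\Xi$, of $\gamma_{th}/\Psi$, and of $\sum_t\vartheta_3$ line up exactly as stated. Beyond this bookkeeping, the derivation uses nothing more than Lemma~\ref{lemmaF} and a single tabulated Bessel integral.
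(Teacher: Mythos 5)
Your derivation is correct and lands on \eqref{EQ24}, but it inverts the order of conditioning relative to the paper and therefore needs one extra trick that the paper never uses. The paper conditions on $\gamma_{\mathrm{R}_a\mathrm{D}}$: it writes ${\rm OP}=\int_0^{\tilde\gamma_{th}}f_{\gamma_{\mathrm{R}_a\mathrm{D}}}(x)\,dx+\int_{\tilde\gamma_{th}}^{\infty}F_{\gamma_{\mathrm{SR}_a}}\bigl(\Lambda_4/\Lambda_5\bigr)f_{\gamma_{\mathrm{R}_a\mathrm{D}}}(x)\,dx$, so the CDF of Lemma~\ref{lemmaF} enters directly, the leading ``$1$'' of \eqref{EQ18} merges with the first integral to give the clean $1+\cdots$ shape for free, and the remaining integral is shifted, binomially expanded, and closed with \cite[Eq.~3.471.9]{jeffrey2007table}. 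You instead condition on $\gamma_{\mathrm{SR}_a}$, which makes the \emph{density} of the order statistic appear, and you recover the CDF by an integration by parts whose boundary term cancels against the ``$1$''-piece; this is a genuinely clever repair, and your bookkeeping checks out (the cancellation $\lambda_{\mathrm{RD}}\tilde\Xi\cdot e^{-\lambda_{\mathrm{RD}}\tilde\gamma_{th}}/(\lambda_{\mathrm{RD}}\tilde\Xi)=e^{-\lambda_{\mathrm{RD}}\tilde\gamma_{th}}$ is exact, and your $t^{q-2}$ integral with $\nu=q-1$ reproduces the same Bessel function as the paper's $y^{-q}$ integral with $\nu=1-q$, since $K_{q-1}=K_{-q+1}$), but the integration by parts is avoidable overhead: conditioning the other way gives the same final integral type without ever touching $f_{\gamma_{\mathrm{SR}_a}}$. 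One notational caution: your $\tilde\Xi$ is the partial-fraction residue, which equals $\tfrac{\gamma_{th}}{\Psi}$ times the paper's $\tilde\Xi=\tilde\gamma_{th}\vartheta_2+\Xi/\Phi$; the reassembly into \eqref{EQ24} still works only because the factors $(\gamma_{th}/\Psi)^{A_2}$ and $\gamma_{th}/\Psi$ inside the Bessel argument are written out explicitly there, so you should state this conversion rather than reuse the symbol.
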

\begin{proof}
See Appendix~\ref{Appendix:theorem_1}.
\end{proof}
From \eqref{EQ24}, the OP is upper bounded by one ideally. Nonetheless, the complicated term shows a reduction in the amount of the OP is complicatedly expressed by several aspects, for example, the relay selection and the effectiveness of the friendly jammer. We notice that an  asymptotic analysis is challenging and left for future work since the transmit power of the satellite and the selected relay is independent of each other. 
{\begin{table*}[t]
		\caption{Simulation parameters.}
		\centering
		\label{Table_2}
		\setlength{\tabcolsep}{3pt}
		\begin{tabular}{|>{\centering\arraybackslash} m{2cm} | >{\centering\arraybackslash} m{7cm}| >{\centering\arraybackslash} m{3cm}| >{\centering\arraybackslash} m{4cm}|}
			\hline 
			\textbf{Symbol} & \textbf{Parameter name} & \textbf{Fixed value} & \textbf{Varying range} \\ \hline \hline
			$C_{\rm th}$&   SNR threshold of the system&     1 & none\\
			$\mu_{\rm{SR}}$& CEEs of S-R link& 0.25 & none\\
			$\mu_{\rm{RD}}$& CEEs of R-D link& 0.25 & none\\
			$\mu_{\rm{RE}}$& CEEs of R-E link& 0.25 & none\\
			$\mu_{\rm{JE}}$& CEEs of R-E link& 0.25 & none\\
			$\Psi$& Transmit power-to-noise-ratio from sattelite & 20 dB& 0 to 50 dB \\
			$\Phi$& Transmit power-to-noise-ratio from relay & 10 dB& none \\
			$\Theta$& Transmit power-to-noise-ratio from jammer & 1 dB& none \\
			$N$ & Number of relays & 1-3 & none \\
			\hline \hline
		\end{tabular}
		\label{tab1}
		\vspace{-0.4cm}
\end{table*}}
\vspace*{-0.2cm}
\subsection{Intercept Probability (IP) Analysis}
The confidential information can be intercepted if eavesdropper successfully decodes received signals, i.e., ${\gamma _{\rm{E}}} \ge {\gamma _{th}}$. Therefore, the IP can be defined by following the similar methodology in \cite{li2020secrecy,YulongPLS2015} as follows
\begin{align}
	\label{EQ25}
	{\rm{IP}} = \Pr \left( {{\gamma _{\rm{E}}} \ge {\gamma _{th}}} \right) = 1 - \Pr \left( {{\gamma _{\rm{E}}} < {\gamma _{th}}} \right).
\end{align}
In order to obtain the analytical from, we base on \eqref{EQ12} and \eqref{EQ25} to reformulate the IP  as follows
\begin{align}
	\label{EQ26}
&{\rm{IP}} = 1 - \Pr \left( {\frac{{\Psi \Phi {\gamma _{{\rm{S}}{{\rm{R}}_a}}}{\gamma _{{{\rm{R}}_a}{\rm{E}}}}}}{\Lambda_6} < {\gamma _{th}}} \right) \notag\\
& = 1 - \int_0^{ + \infty } Q \times {f_{{\gamma_{{\rm{JE}}}}}}(x)dx ,
\end{align}
where $Q \triangleq \Pr \left( {\frac{{\Psi \Phi {\gamma _{{\rm{S}}{{\rm{R}}_a}}}{\gamma _{{{\rm{R}}_a}{\rm{E}}}}}}{\Lambda_6} < {\gamma _{th}}} \right),$ $\Lambda_6 \triangleq
	{{\gamma _{{{\rm{R}}_a}{\rm{E}}}}\Phi \vartheta_2 + {\gamma _{{\rm{S}}{{\rm{R}}_a}}}\Psi \vartheta_5 + {\gamma _{{\rm{JE}}}}\Theta \vartheta_2 + \Lambda }$, and $\vartheta_5 \triangleq \left( {\mu _{{\rm{RE}}}^2\Phi  + \Theta \mu _{{\rm{JE}}}^2 + \Theta {\gamma _{{\rm{JE}}}} + 1} \right)$. From on \eqref{EQ26}, $Q$ can be calculated as
\begin{align}
	\label{EQ27}
Q &= \Pr \Big( {\Psi \Phi {\gamma _{{\rm{S}}{{\rm{R}}_a}}}{\gamma _{{{\rm{R}}_a}{\rm{E}}}} < {\gamma _{th}} \big(\Lambda_7+ \Lambda_8\big)} \Big)\notag\\
& = \Pr \left( {\Psi {\gamma _{{\rm{S}}{{\rm{R}}_a}}}\left[ {\Phi {\gamma _{{{\rm{R}}_a}{\rm{E}}}} - {\gamma _{th}}\Lambda_8 } \right] < {\gamma _{th}}\Lambda_7} \right)\notag\\
& = \left\{ \begin{array}{l}
\Pr \left( {\gamma _{{\rm{S}}{{\rm{R}}_a}}} < \frac{{{\gamma _{th}}\Lambda_7}}{\Lambda_9} \right),{\rm{if}}\,{\gamma _{{{\rm{R}}_a}{\rm{E}}}} > \frac{{{\gamma _{th}} \vartheta_5 }}{\Phi } \\
1\hfill,{\rm{if}}\,{\gamma _{{{\rm{R}}_a}{\rm{E}}}} \le \frac{{{\gamma _{th}}\vartheta_5}}{\Phi }
\end{array} \right. \notag\\
& = \int_0^{\Delta (x)} {{f_{{\gamma _{{{\rm{R}}_a}{\rm{E}}}}}}(y)dy} \notag\\ & + \int_{\Delta (x)}^{ + \infty } {{F_{{\gamma _{{\rm{S}}{{\rm{R}}_a}}}}}\left( {\frac{{{\gamma _{th}}\left[ {y\vartheta_2 + \Omega (x)} \right]}}{{\Psi \left[ {y - \Delta (x)} \right]}}} \right)  {f_{{\gamma _{{{\rm{R}}_a}{\rm{E}}}}}}(y)dy}.
\end{align}
where $\Lambda_7 \triangleq \left[ {{\gamma _{{{\rm{R}}_a}{\rm{E}}}}\Phi \vartheta_2 + x\Theta \vartheta_2 + \Lambda } \right]$, $\Lambda_8 \triangleq {\gamma _{{\rm{S}}{{\rm{R}}_a}}}\Psi \big( {\mu _{{\rm{RE}}}^2\Phi  + \Theta \mu _{{\rm{JE}}}^2 + \Theta x + 1} \big),$ $\Lambda_9 \triangleq \Psi \left[ {\Phi {\gamma _{{{\rm{R}}_a}{\rm{E}}}} - {\gamma _{th}}\vartheta_5} \right]$, $\Omega (x) \triangleq \frac{{x\Theta \vartheta_2 + \Lambda }}{\Phi },$ $\Delta (x) \triangleq \frac{{{\gamma _{th}}\vartheta_5s}}{\Phi } = {\mathord{\buildrel{\lower3pt\hbox{$\scriptscriptstyle\frown$}}\over \gamma } _{th}} + \frac{{{\gamma _{th}}\Theta x}}{\Phi }, $
${\mathord{\buildrel{\lower3pt\hbox{$\scriptscriptstyle\frown$}} \over \gamma } _{th}} \triangleq \frac{{{\gamma _{th}}\left( {\mu _{{\rm{RE}}}^2\Phi  + \Theta \mu _{{\rm{JE}}}^2 + 1} \right)}}{\Phi }.$ By using \eqref{EQ18},  $Q$ in \eqref{EQ27} can be thus re-written as 
\begin{align}
	\label{EQ28}
Q &= 1 + \sum\limits_{n = 1}^N \frac{{{{( - 1)}^n}}}{{n!}} \notag\\ &\sum\limits_{}^ *  {{\alpha _{{n_t}}}\frac{{{{( - 1)}^{{k_t}}}{{\left( {1 - {m_{{n_t}}}} \right)}_{{k_t}}}{{\left( {{\delta _{{n_t}}}} \right)}^{{k_t}}}{\lambda _{{\rm{RE}}}}}}{{{k_t}!{p_t}!{\vartheta_3^{A_3}}}}}  \notag\\
& \times \int_{\Delta (x)}^{ + \infty } {{\left( {\frac{{{\gamma _{th}}\left[ {y\vartheta_2 + \Omega (x)} \right]}}{{\Psi \left[ {y - \Delta (x)} \right]}}} \right)}^{\sum\limits_{t = 1}^n {{p_t}} }} \notag\\ &\exp \left( { - \sum\limits_{t = 1}^n {\vartheta_3\left( {\frac{{{\gamma _{th}}\left[ {y\vartheta_2 + \Omega (x)} \right]}}{{\Psi \left[ {y - \Delta (x)} \right]}}} \right)}  - {\lambda _{{\rm{RE}}}}y} \right) dy,
\end{align}
Let us denote $z = y - \Delta (x)$, then \eqref{EQ28} can be expressed as 
\begin{align}
	\label{EQ29}
&Q = 1 + \sum\limits_{n = 1}^N \frac{{{{( - 1)}^n}}}{{n!}} \sum\limits_{}^ *  {{\alpha _{{n_t}}}\frac{{{{( - 1)}^{{k_t}}}{{\left( {1 - {m_{{n_t}}}} \right)}_{{k_t}}}{{\left( {{\delta _{{n_t}}}} \right)}^{{k_t}}}{\lambda _{{\rm{RE}}}}}}{{{k_t}!{p_t}!{\vartheta_3^{A_3}}}}}  \notag\\
& \times \exp \left( { - \frac{{\sum\nolimits_{t = 1}^n \vartheta_3 {\gamma _{th}}\vartheta_2}}{\Psi } - {\lambda _{{\rm{RE}}}}\Delta (x)} \right)  {\left( {\frac{{{\gamma _{th}}}}{\Psi }} \right)^{A_2 }} \notag\\
& \times \int_0^{ + \infty } {{{\left( {\vartheta_2 + \frac{\vartheta_6}{z}} \right)}^{A_2 }}} \exp \left( { - \frac{{\sum\nolimits_{t = 1}^n {\vartheta_3} {\gamma _{th}}\vartheta_6}}{{\Psi z}} - {\lambda _{{\rm{RE}}}}z} \right)dz,
\end{align}
where $\vartheta_6 \triangleq {\Delta (x)\vartheta_2 + \Omega (x)}.$ By applying the Binomial Theorem, $Q$ can be calculated as
\begin{align}
	\label{EQ30}
Q &= 1 + \sum\limits_{n = 1}^N \frac{{{{( - 1)}^n}}}{{n!}}\notag\\& \sum\limits_{}^ *  {{\alpha _{{n_t}}}\frac{{{{( - 1)}^{{k_t}}}{{\left( {1 - {m_{{n_t}}}} \right)}_{{k_t}}}{{\left( {{\delta _{{n_t}}}} \right)}^{{k_t}}}{\lambda _{{\rm{RE}}}}}}{{{k_t}!{p_t}!}}{{\vartheta_3}^{ - A_3}}} \notag \\ 
& \times \exp \left( { - \frac{{\sum\nolimits_{t = 1}^n \vartheta_3 {\gamma _{th}}\vartheta_2}}{\Psi } - {\lambda _{{\rm{RE}}}}\Delta (x)} \right)  {\left( {\frac{{{\gamma _{th}}}}{\Psi }} \right)^{A_2 }} \notag  \\
&\times \sum\limits_{q = 0}^{A_2 } {\left( {\begin{array}{*{20}{c}}
			{A_2}\\
			q
	\end{array}} \right){\vartheta_2^{A_2 - q}}}  {\vartheta_6^q}\int_0^{ + \infty } {{z^{ - q}}} \times
 \notag\\
  &\exp \left( { - \frac{{\sum\nolimits_{t = 1}^n {\vartheta_3} {\gamma _{th}}\vartheta_6}}{{\Psi z}} - {\lambda _{{\rm{RE}}}}z} \right)dz.
\end{align}
\begin{lemma}
	\label{lemma_2}
	With using the same approach as \eqref{EQ24}, the closed-formed expression of $Q$ can be expressed as
	\begin{align}
		\label{EQ31}
		&Q = 1 + 2\sum\limits_{n = 1}^N \frac{{{{( - 1)}^n}}}{{n!}}  \sum\limits_{}^ *  {{\alpha _{{n_t}}}\frac{{{{( - 1)}^{{k_t}}}{{\left( {1 - {m_{{n_t}}}} \right)}_{{k_t}}}{{\left( {{\delta _{{n_t}}}} \right)}^{{k_t}}}{\lambda _{{\rm{RE}}}}}}{{{k_t}!{p_t}!{\vartheta_3^{A_3}}}}}  \notag\\
		& \times \exp \left( { - \frac{{\sum\nolimits_{t = 1}^n \vartheta_3 {\gamma _{th}}\vartheta_2}}{\Psi } - {\lambda _{{\rm{RE}}}}\Delta (x)} \right) 
		\notag\\& \times {\left( {\frac{{{\gamma _{th}}}}{\Psi }} \right)^{A_2 }}  \sum\limits_{q = 0}^{A_2 } {\left( {\begin{array}{*{20}{c}}
					A_2\\
					q
			\end{array}} \right){\vartheta_2^{A_2 - q}}}  {\vartheta_6^{\frac{{q + 1}}{2}}}{\left( {\frac{{\sum\nolimits_{t = 1}^n \vartheta_3 {\gamma _{th}}}}{{\Psi {\lambda _{{\rm{RE}}}}}}} \right)^{\frac{{ - q + 1}}{2}}} \notag \\
		& \times  {K_{ - q + 1}}\left( {2\sqrt {\frac{{\sum\nolimits_{t = 1}^n \vartheta_3 {\gamma _{th}}{\lambda _{{\rm{RE}}}}\vartheta_6}}{\Psi }} } \right).
	\end{align}
\end{lemma}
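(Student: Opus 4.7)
The plan is to recognize the remaining inner integral in \eqref{EQ30} as a classical form that reduces to a modified Bessel function of the second kind. After the Binomial expansion of $(\vartheta_2+\vartheta_6/z)^{A_2}$ has been carried out and every factor independent of $z$ has been pulled outside, the only unresolved piece is
\begin{equation*}
\mathcal{I}(q) \;=\; \int_0^{+\infty} z^{-q}\exp\!\left(-\frac{a}{z}-b z\right)dz,
\end{equation*}
with $a = \bigl(\sum_{t=1}^{n}\vartheta_3\bigr)\gamma_{th}\vartheta_6/\Psi$ and $b = \lambda_{\mathrm{RE}}$. So the whole Lemma collapses to evaluating $\mathcal{I}(q)$ in closed form and then folding the result back into the prefactors already present in \eqref{EQ30}.

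First I would invoke the standard identity \cite[Eq.~3.471.9]{jeffrey2007table},
\begin{equation*}
\int_0^\infty x^{\nu-1}\exp\!\left(-\frac{\beta}{x}-\gamma x\right)dx \;=\; 2\left(\frac{\beta}{\gamma}\right)^{\!\nu/2}\!K_\nu\!\left(2\sqrt{\beta\gamma}\right),
\end{equation*}
valid whenever $\beta,\gamma>0$. Setting $\nu=1-q$ (so that $x^{\nu-1}=z^{-q}$), $\beta=a$, and $\gamma=b$ yields $\mathcal{I}(q)=2(a/b)^{(1-q)/2}K_{1-q}(2\sqrt{ab})$. The integrability condition is automatic in the physical setup: $\vartheta_3=\beta_{n_t}-\delta_{n_t}>0$ by construction, and $\Psi,\lambda_{\mathrm{RE}}>0$, hence $a,b>0$. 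Substituting the explicit $a,b$ and separating the $\vartheta_6$ factor from $(a/b)^{(1-q)/2}$ gives
\begin{equation*}
\mathcal{I}(q) \;=\; 2\,\vartheta_6^{(1-q)/2}\!\left(\frac{\sum_{t=1}^{n}\vartheta_3\,\gamma_{th}}{\Psi\,\lambda_{\mathrm{RE}}}\right)^{\!(-q+1)/2}\! K_{-q+1}\!\left(2\sqrt{\frac{\sum_{t=1}^{n}\vartheta_3\,\gamma_{th}\,\lambda_{\mathrm{RE}}\,\vartheta_6}{\Psi}}\right).
\end{equation*}
Multiplying by the $\vartheta_6^{\,q}$ that sits outside the integral in \eqref{EQ30} and using $q+(1-q)/2=(q+1)/2$ produces the factor $\vartheta_6^{(q+1)/2}$ visible in \eqref{EQ31}, while the overall leading constant $2$ in \eqref{EQ31} is inherited directly from the Bessel-integral identity. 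Re-attaching the $\sum_{n}$, $\sum^{\ast}$, $\sum_q\binom{A_2}{q}\vartheta_2^{A_2-q}$, the exponential prefactor $\exp(\cdot)$, and the $(\gamma_{th}/\Psi)^{A_2}$ term that were carried along unchanged gives \eqref{EQ31} exactly.

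The main obstacle is purely bookkeeping rather than analytic: I have to keep the nested index set $\{n,n_t,k_t,p_t,q\}$ aligned with the compressed notation $A_2,A_3,\vartheta_2,\vartheta_3,\vartheta_6$ introduced earlier, and verify that the exponent arithmetic for $\vartheta_6$ and for the $(a/b)^{(1-q)/2}$ block matches the stated result. Once the exponent bookkeeping is checked and the positivity of $a,b$ is noted, no further manipulation is required, and the lemma follows from \eqref{EQ30} by a single application of \cite[Eq.~3.471.9]{jeffrey2007table}.
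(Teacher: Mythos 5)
Your proposal is correct and follows exactly the route the paper intends: the paper's own justification of Lemma~\ref{lemma_2} is simply the remark that one applies ``the same approach as \eqref{EQ24},'' i.e.\ the identity \cite[Eq.~3.471.9]{jeffrey2007table} used in Appendix~\ref{Appendix:theorem_1}, and your exponent bookkeeping ($\vartheta_6^{\,q}\cdot\vartheta_6^{(1-q)/2}=\vartheta_6^{(q+1)/2}$, the leading factor of $2$, and the order $K_{-q+1}$) reproduces \eqref{EQ31} exactly. Your added check that $\vartheta_3=\beta_{n_t}-\delta_{n_t}>0$ so the integral converges is a small but welcome point the paper leaves implicit.
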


\begin{theorem}
	\label{theorem_2}
	Finally, by applying \cite[Eq.6.592.4]  {jeffrey2007table}, the closed-formed expression in terms of IP can be mathematically represented as
		\begin{align}
			\label{EQ40}
			&{\rm{IP}} = \sum\limits_{n = 1}^N \frac{{{{( - 1)}^{n + 1}}}}{{n!}} \times {\left( {\frac{{\sum\nolimits_{t = 1}^n \vartheta_3 {\gamma _{th}}}}{{\Psi {\lambda _{{\rm{RE}}}}}}} \right)^{\frac{{ - q + 1}}{2}}} \notag\\ &\times \sum\limits_{}^ *  {{\alpha _{{n_t}}}\frac{{{{( - 1)}^{{k_t}}}{{\left( {1 - {m_{{n_t}}}} \right)}_{{k_t}}}{{\left( {{\delta _{{n_t}}}} \right)}^{{k_t}}}{\lambda _{{\rm{RE}}}}{\lambda _{{\rm{JE}}}}}}{{{k_t}!{p_t}!{\vartheta_3^{A_3}}}}}  \notag\\
			& \times \exp \left( \Lambda_{10} \tilde \Lambda  - {\lambda _{{\rm{RE}}}}{{\mathord{\buildrel{\lower3pt\hbox{$\scriptscriptstyle\frown$}} 
							\over \gamma } }_{th}} - \frac{{\sum\nolimits_{t = 1}^n \vartheta_3 {\gamma _{th}}\vartheta_2}}{\Psi } \right)
			\end{align}
		    \begin{align*}
			& \times {\left( {\frac{{{\gamma _{th}}}}{\Psi }} \right)^{\sum\limits_{t = 1}^n {{p_t}} }} \sum\limits_{q = 0}^{\sum\limits_{t = 1}^n {{p_t}} } {\left( {\begin{array}{*{20}{c}}
						{\sum\limits_{t = 1}^n {{p_t}} }\\
						q
				\end{array}} \right)}  \frac{{{{\vartheta_2}^{A_2 - q}}{{\left( {\tilde \Lambda } \right)}^{\frac{{q + 3}}{2}}}}}{{\vartheta_7\vartheta_2{\gamma _{th}}}} \notag\\
			& \times \sum\limits_{w = 0}^\infty  \frac{{{{( - 1)}^w}{2^{q + 2w + 1}}{{\left( {\frac{{\left[ {{\lambda _{{\rm{JE}}}} + \frac{{{\gamma _{th}}{\lambda _{{\rm{RE}}}}\Theta }}{\Phi }} \right]\tilde \Lambda }}{{\vartheta_7\vartheta_2{\gamma _{th}}}}} \right)}^w}}}{{w!{\zeta ^{q + 2w + 1}}}} \notag\\ &\times G_{1,3}^{3,0}\left( {\frac{{{\zeta ^2}}}{4}\left| \begin{array}{l}
						0\\
						- 1,1 + w{\rm{,q + }}w
					\end{array} \right.} \right) ,
		\end{align*}
		where $\zeta  = 2\sqrt {\frac{{\sum\limits_{t = 1}^n \vartheta_3 {\gamma _{th}}{\lambda _{{\rm{RE}}}}\tilde \Lambda }}{\Psi }},$ and ${\rm{G}}_{p,q}^{m,n}\left( {z\left| \begin{array}{l}
					{a_1},\dots,{a_p}\\
					{b_1},\dots,{b_q}
				\end{array} \right.} \right)$
		is the Meijer G-function.
\end{theorem}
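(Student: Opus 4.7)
The plan is to substitute the closed-form expression \eqref{EQ31} for $Q$ from Lemma~\ref{lemma_2} into \eqref{EQ26}, and to exploit that the J-E link is Rayleigh, so $f_{\gamma_{\rm JE}}(x) = \lambda_{\rm JE}\exp(-\lambda_{\rm JE} x)$. The constant $1$ at the head of $Q$ integrates to unity against this exponential density, which cancels the leading $1$ in $\mathrm{IP} = 1 - \int_0^{+\infty} Q\, f_{\gamma_{\rm JE}}(x)\,dx$ and flips the sign of every summand; this is the origin of the overall factor $(-1)^{n+1}$ in \eqref{EQ40}.

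After this cancellation I am left, for each multi-index contributing to $\sum^{\ast}$, with an integral of the form $\int_0^{+\infty} \vartheta_6(x)^{(q+1)/2}\, K_{-q+1}\!\big(2\sqrt{c\,\vartheta_6(x)}\big) \exp\!\big(-\lambda_{\rm RE}\Delta(x) - \lambda_{\rm JE} x\big) dx$, where $c := \sum_{t=1}^{n}\vartheta_3\gamma_{th}/\Psi$. Both $\Delta(x)$ and $\Omega(x)$ are affine in $x$ (see the definitions given below \eqref{EQ27}), hence $\vartheta_6(x) = \vartheta_2\Delta(x) + \Omega(x)$ is also affine; writing $\vartheta_6(x) = \tilde\Lambda + \kappa x$ with $\tilde\Lambda := \vartheta_6(0)$, pulling $\exp(-\lambda_{\rm RE}\hat\gamma_{th})$ out of the integral, and combining the linear-in-$x$ parts of $\lambda_{\rm RE}\Delta(x)$ with $\lambda_{\rm JE} x$ produces the effective exponential decay rate $\lambda_{\rm JE} + \gamma_{th}\lambda_{\rm RE}\Theta/\Phi$ that is visible in \eqref{EQ40}.

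The main obstacle is that $x$ appears simultaneously inside the polynomial factor $\vartheta_6^{(q+1)/2}$ and inside the argument of the Bessel function, so a direct Laplace-type integration is not applicable. My plan is to Taylor-expand the Bessel-weighted factor in powers of $\kappa x$ about $x = 0$, exploiting iteratively the identity $\frac{d}{du}\!\left[u^{\nu/2} K_{\nu}(2\sqrt{cu})\right] = -\sqrt{c}\, u^{(\nu-1)/2} K_{\nu-1}(2\sqrt{cu})$. This expansion produces an infinite power series in $\kappa x$ whose coefficients involve the Bessel values $K_{-q+1-w}(2\sqrt{c\tilde\Lambda})$ together with a factor $\tilde\Lambda^{(q+1-w)/2}$, which is precisely the sum indexed by $w$ in \eqref{EQ40}. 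The explicit rate $\lambda_{\rm JE} + \gamma_{th}\lambda_{\rm RE}\Theta/\Phi$ then enters through the residual Laplace integrals $\int_0^{+\infty} x^{w}\exp(-[\lambda_{\rm JE} + \gamma_{th}\lambda_{\rm RE}\Theta/\Phi]\,x)\,dx$, producing the $w$-th power of that rate and the factor $w!^{-1}$.

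The final step, once the integrand has been reduced to a sum of products of an elementary power of $x$, a pure exponential in $x$, and a fixed Bessel factor evaluated at $2\sqrt{c\tilde\Lambda}$, is to invoke \cite[Eq.~6.592.4]{jeffrey2007table} to rewrite the remaining Bessel-type contribution as a Meijer $G$-function. This identity yields exactly the $G_{1,3}^{3,0}\!\left(\zeta^2/4 \mid 0;\, -1,\, 1+w,\, q+w\right)$ that appears in \eqref{EQ40}, with $\zeta = 2\sqrt{c\tilde\Lambda}$ identified from the pre-existing Bessel argument. Assembling the multi-index summations $\sum^{\ast}$, the Pochhammer and binomial constants, the factor $\tilde\Lambda^{(q+3)/2}$, and the normalisation $1/(\vartheta_7 \vartheta_2 \gamma_{th})$ completes the derivation of \eqref{EQ40}.
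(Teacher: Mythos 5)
Your opening and closing moves match the paper: you substitute the Lemma~\ref{lemma_2} expression for $Q$ into \eqref{EQ26}, let the leading $1$ integrate to unity against $f_{\gamma_{\rm JE}}(x)=\lambda_{\rm JE}e^{-\lambda_{\rm JE}x}$ to cancel the $1$ in ${\rm IP}=1-\int Q f_{\gamma_{\rm JE}}\,dx$ and produce $(-1)^{n+1}$, and you correctly observe that $\vartheta_6(x)=\vartheta_2\Delta(x)+\Omega(x)$ is affine in $x$ with $\vartheta_6(0)=\tilde\Lambda$ and that the linear parts of the exponent combine into the rate $\lambda_{\rm JE}+\gamma_{th}\lambda_{\rm RE}\Theta/\Phi$. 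The paper does exactly this in \eqref{EQ32}--\eqref{EQ36}, then substitutes $t=\kappa x/\tilde\Lambda+1$ so that $\vartheta_6=t\tilde\Lambda$ and the integral becomes $\int_1^{\infty}t^{(q+1)/2}e^{-\Lambda_{10}\tilde\Lambda t}K_{-q+1}(\zeta\sqrt{t})\,dt$; it Maclaurin-expands the \emph{exponential} $e^{-\Lambda_{10}\tilde\Lambda t}=\sum_w(-1)^w(\Lambda_{10}\tilde\Lambda)^w t^w/w!$ (an entire function), and then evaluates each $\int_1^{\infty}t^{(q+1+2w)/2}K_{-q+1}(\zeta\sqrt{t})\,dt$ via \cite[Eq.~6.592.4]{jeffrey2007table}, which is precisely the integral identity over $[1,\infty)$ that manufactures the $G^{3,0}_{1,3}$ with parameter row $-1,\,1+w,\,q+w$.

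Your middle step is where the argument breaks. You propose instead to Taylor-expand the Bessel-weighted factor $u^{(q+1)/2}K_{-q+1}(2\sqrt{cu})$ about $u=\tilde\Lambda$ and integrate term by term. That expansion has radius of convergence $\tilde\Lambda$ (the nearest singularity of $K_\nu(2\sqrt{cu})$ is the branch point at $u=0$), while the integration runs over $u\in[\tilde\Lambda,\infty)$, so the series diverges on most of the domain and term-by-term integration is not justified. Moreover, your residual integrals $\int_0^{\infty}x^{w}e^{-ax}\,dx=w!/a^{w+1}$ put $w!$ in the numerator and \emph{negative} powers of the rate, whereas \eqref{EQ40} exhibits $(\Lambda_{10}\tilde\Lambda)^{w}/w!$ with alternating sign --- the unmistakable fingerprint of expanding the exponential, not the Bessel factor. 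Finally, after your expansion nothing integral remains: you would be left with point values $K_{-q+1-w}(2\sqrt{c\tilde\Lambda})$, and \cite[Eq.~6.592.4]{jeffrey2007table} is an integral formula, not a conversion of a single Bessel value into a $G^{3,0}_{1,3}$; the closest pointwise identity yields only a $G^{2,0}_{0,2}$. You need to swap which factor is series-expanded: keep the Bessel function inside the integral, expand $\exp(-\Lambda_{10}\tilde\Lambda t)$, and apply 6.592.4 to the surviving $\int_1^{\infty}t^{(q+1+2w)/2}K_{-q+1}(\zeta\sqrt{t})\,dt$.
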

\begin{proof}
See Appendix~\ref{Appendix:theorem_2}.
\end{proof}
\begin{remark}
This paper derives the closed-form expressions for the IP and OP, independent of the small-scale fading coefficients and therefore working for a long time. We provide an initial mechanism to analyze the security and reliability of an integrated satellite-terrestrial multi-relay network under imperfect CSI. Since the analytical results in Theorems ~\ref{theorem_1}  and \ref{theorem_2} are only the multivariate functions of channel statistics, the IP and OP can be evaluated at a lower cost than numerically averaging over many different realizations of the small-scale fading coefficients. This fact reduces the cost of the system design and can be easily  deployed in practice.
\end{remark}
\begin{remark}
The system-level design makes steps toward practical applications. One of the possible scenarios is that a satellite provides mobile streaming services and the Internet to ordinary handheld devices located in indoor areas such as multi-floor buildings. Under this scenario, the end device can certainly not receive signals from the satellite directly. As a result, a terrestrial network and a set of gap-filler are necessary to help the end device fully access these services \cite{etsi3102,nguyen2022outage}.
\end{remark}
\begin{figure*}[t]
	\centering
	\begin{minipage}{.48\textwidth}
		\centering
		\includegraphics[width=8.5cm,height=7cm]{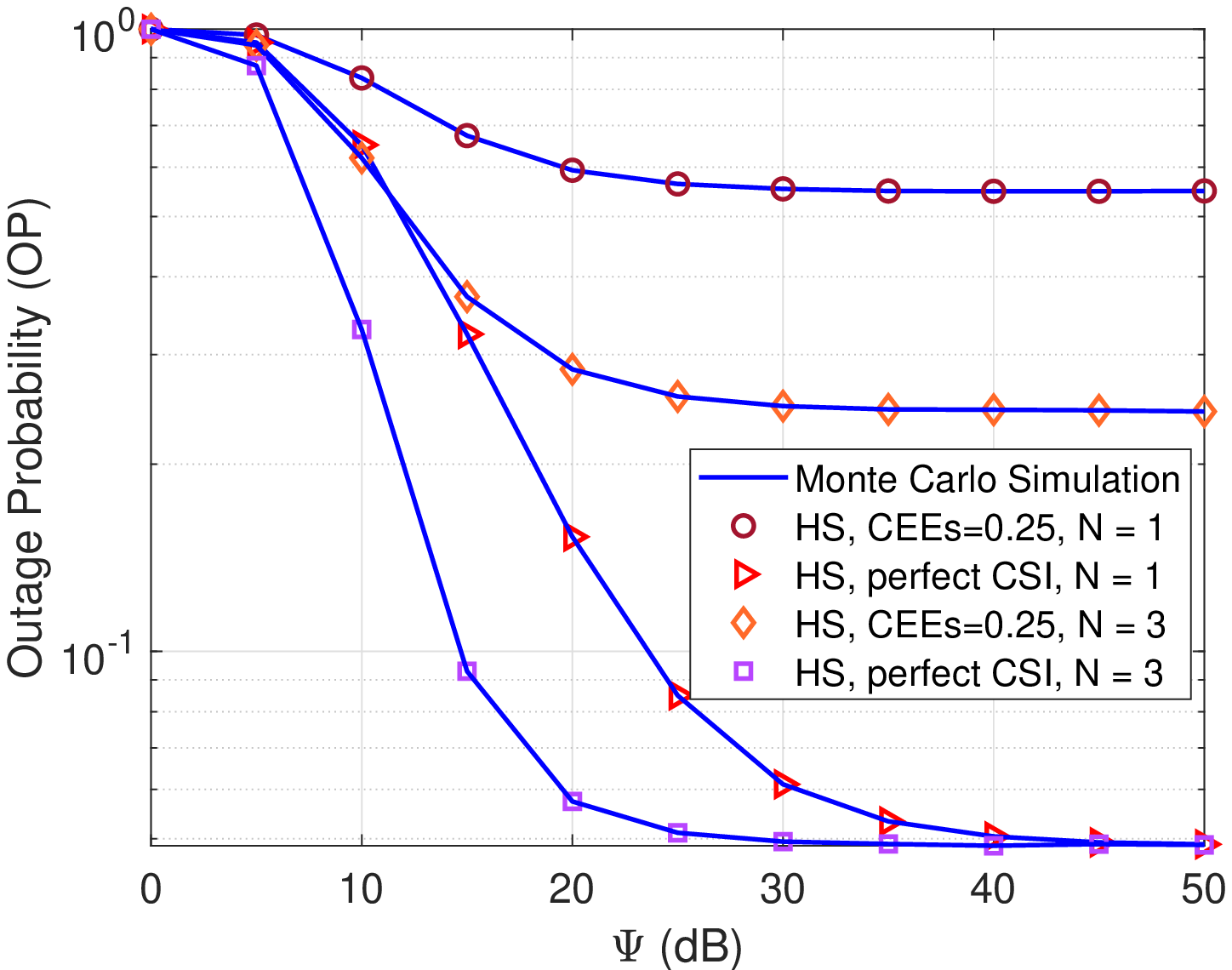}
		\caption{OP in HS case versus $\Psi$ with $\rm C_{th}$ = 1(bps/Hz) and $\Phi$=20 dB.}
		\label{fig:2}
			\vspace{-0.2cm}
	\end{minipage} \hfill
	\begin{minipage}{.48\textwidth}
		\centering
		\includegraphics[width=8.5cm,height=7cm]{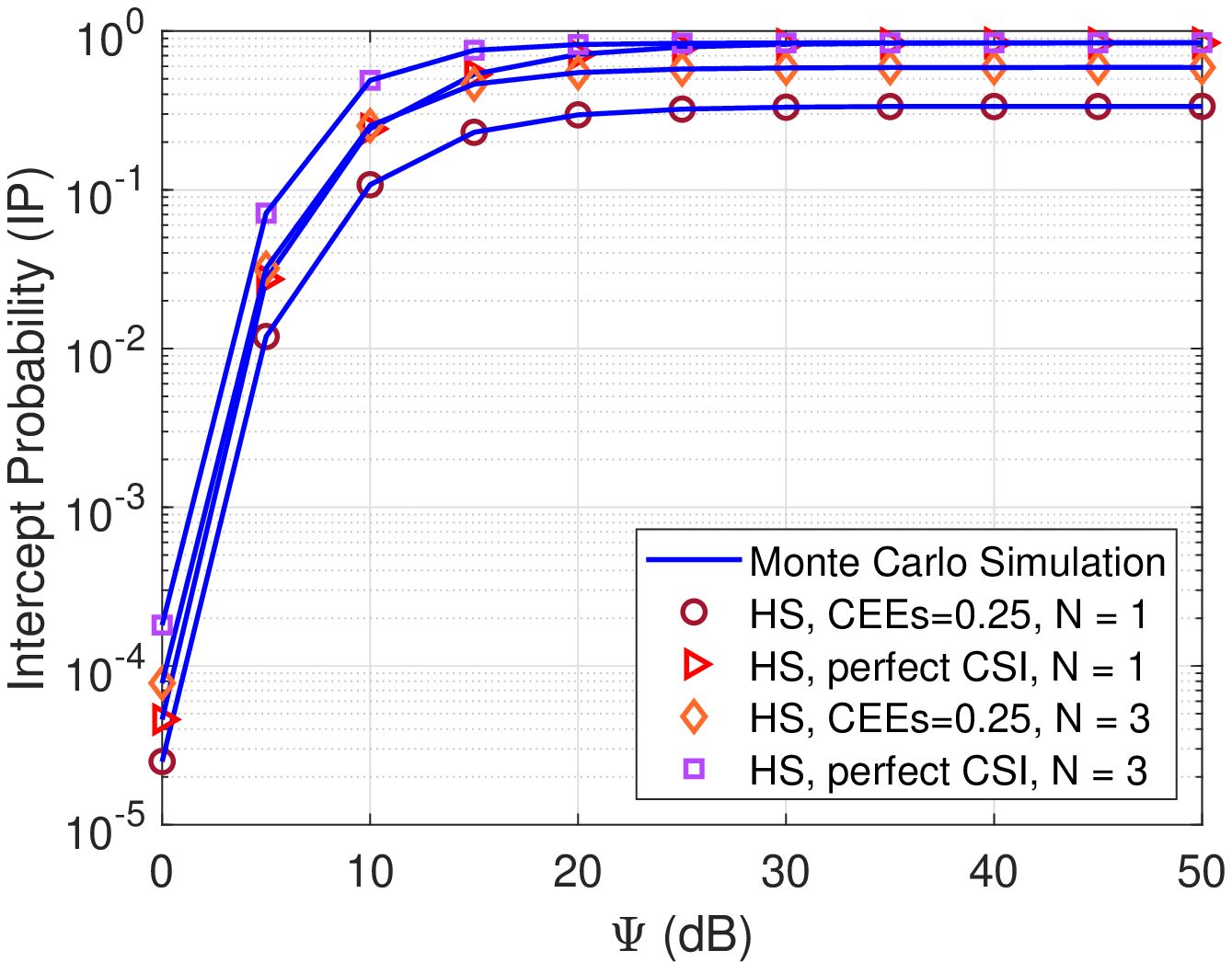}
		\caption{IP in HS versus $\Psi$ with $\rm C_{th}$ = 1(bps/Hz), $\Phi$=20 dB and $\Theta$=1 dB.}
		\label{fig:3}
		\vspace{-0.2cm}
	\end{minipage}
\vspace*{-0.2cm}
\end{figure*}

\begin{figure*}[t]
	\centering
	\begin{minipage}{.48\textwidth}
		\centering
		\includegraphics[width=8.5cm,height=7cm]{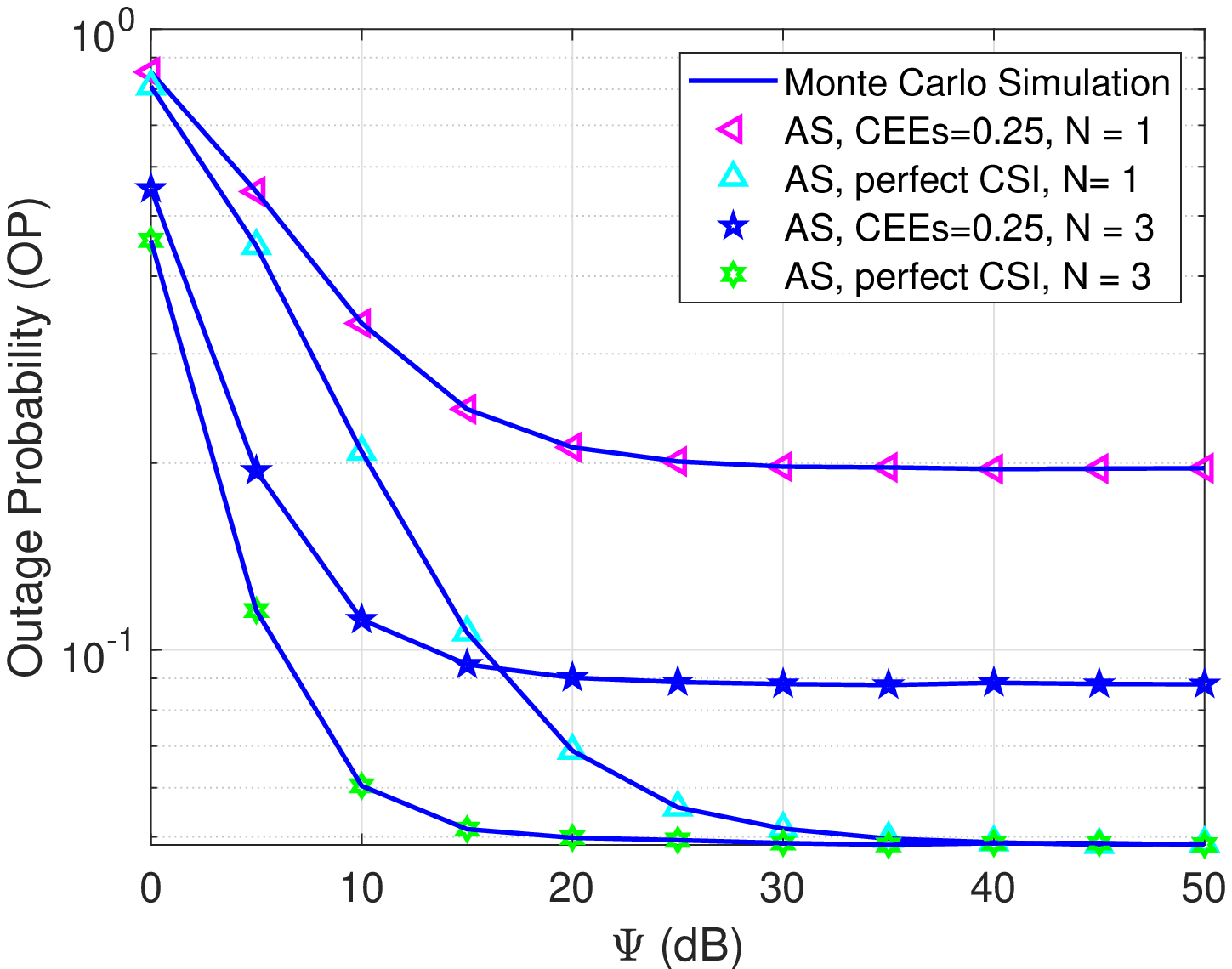}
		\caption{OP in AS case versus $\Psi$ with $\rm C_{th}$ = 1(bps/Hz) and $\Phi$=20 dB.}
		\label{fig:4}
	\end{minipage} \hfill
	\begin{minipage}{.48\textwidth}
		\centering
		\includegraphics[width=8.5cm,height=7cm]{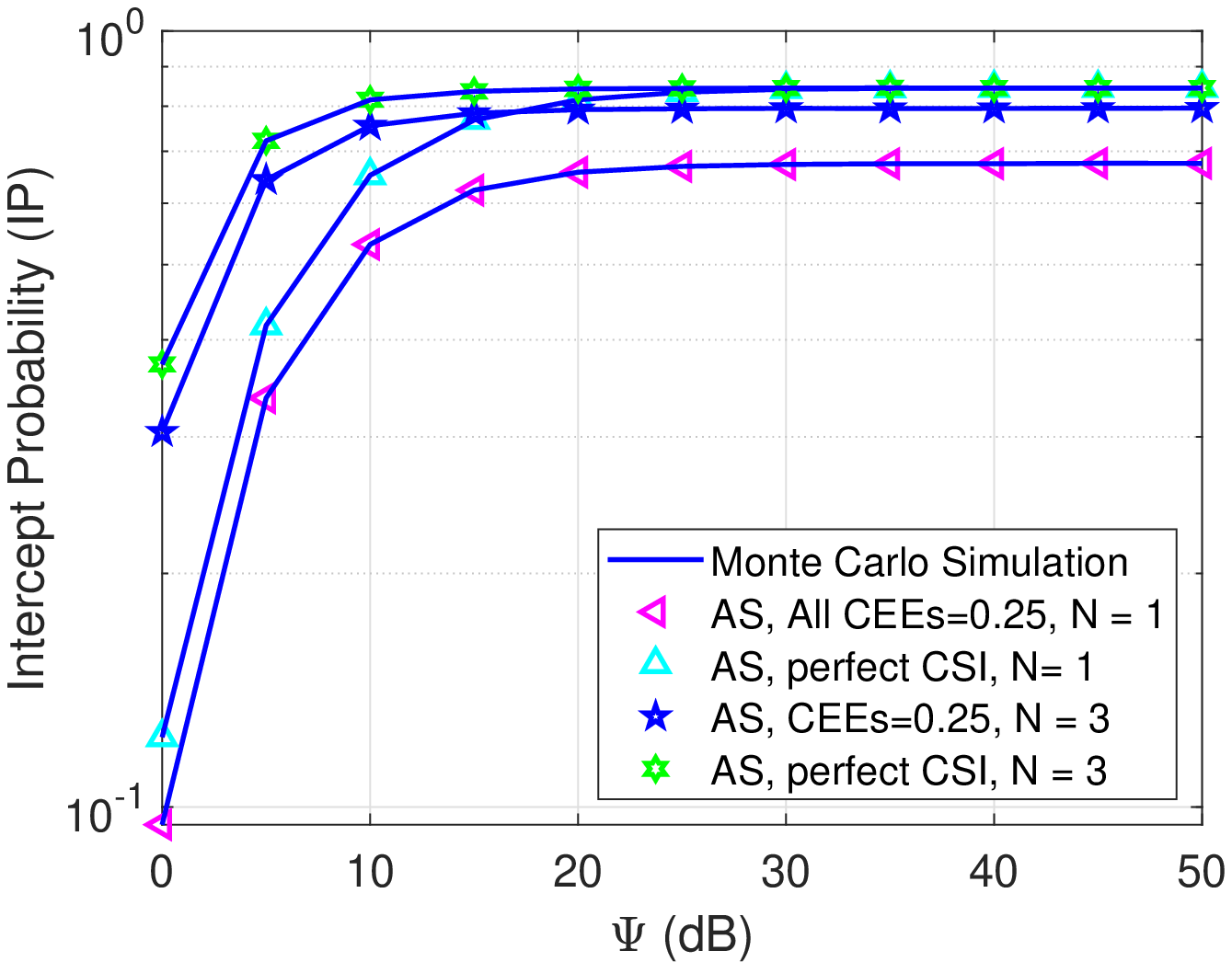}
		\caption{IP in AS versus $\Psi$ with $\rm C_{th}$ = 1(bps/Hz), $\Phi$=20 dB and $\Theta$=1 dB. }
		\label{fig:5}
	\end{minipage}
\vspace*{-0.4cm}
\end{figure*}

\begin{figure*}[t]
	\centering
	\begin{minipage}{.48\textwidth}
		\centering
		\includegraphics[width=8.5cm,height=7cm]{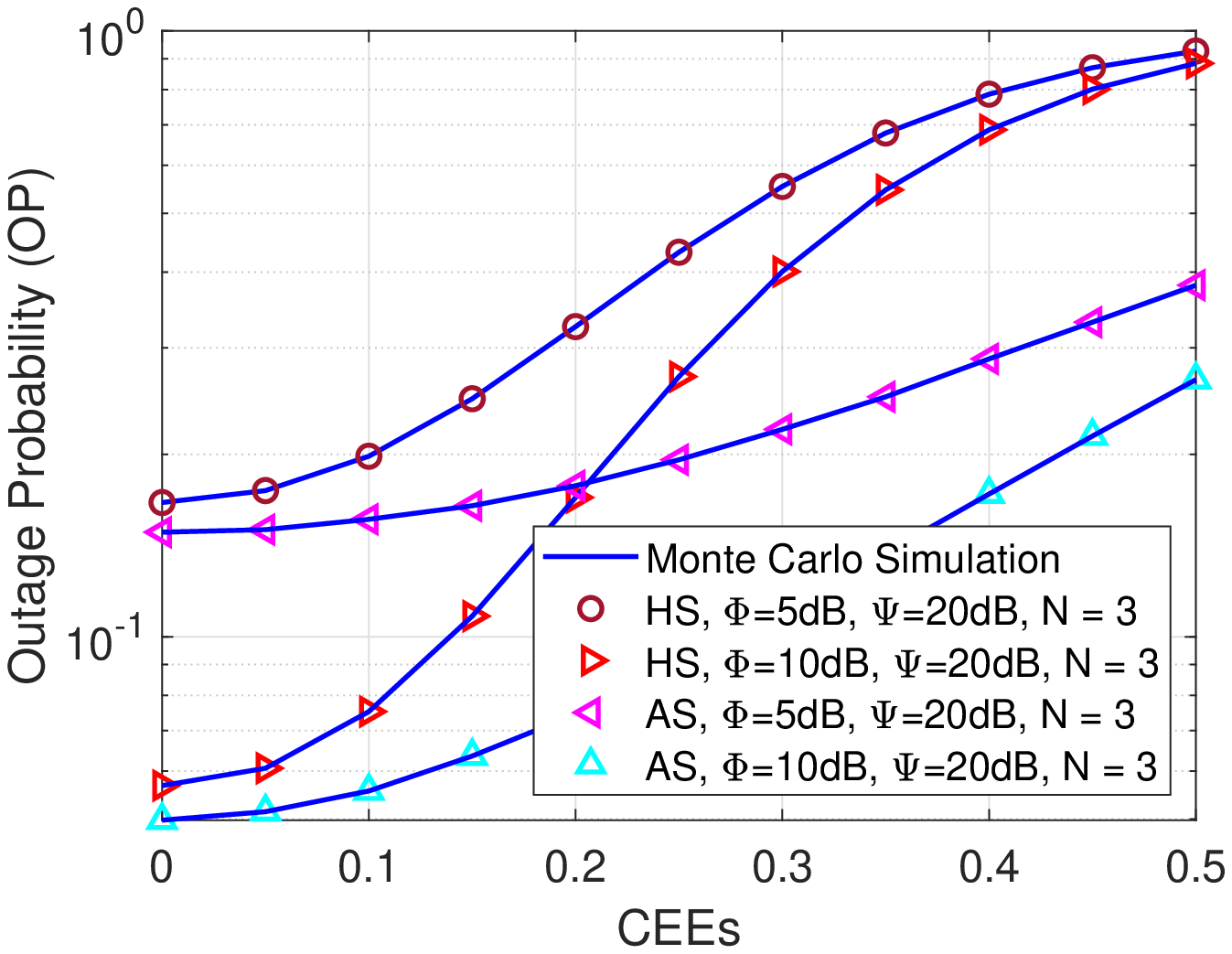}
		\caption{OP versus all CEEs with N= 3, $\rm C_{th}$=1 bps/Hz and $\Psi$ = 20 dB.}
		\label{fig:6}
	\end{minipage} \hfill
	\begin{minipage}{.48\textwidth}
		\centering
		\includegraphics[width=8.5cm,height=7cm]{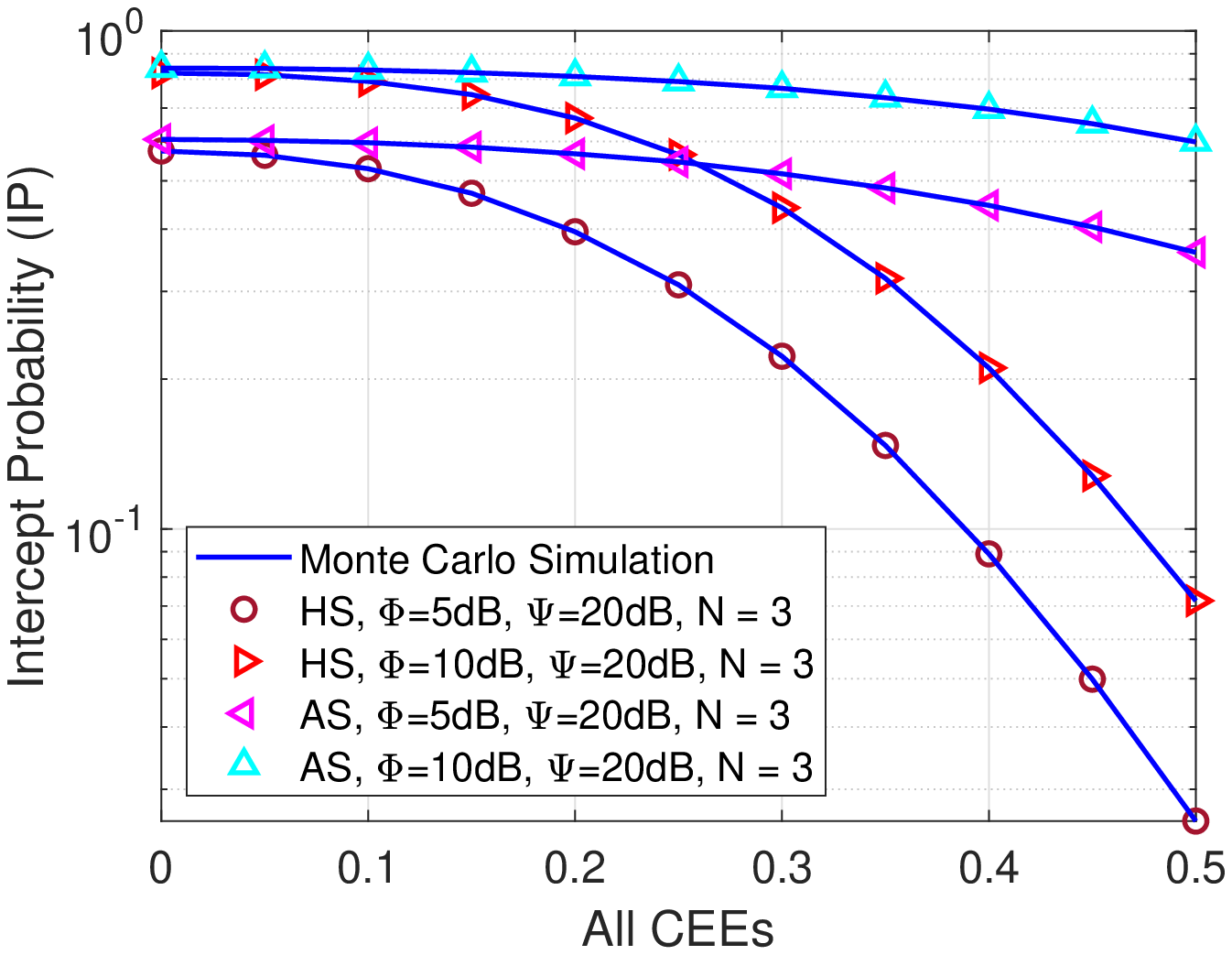}
		\caption{IP versus all CEEs with N= 3, $\rm C_{th}$=1 bps/Hz, $\Psi$ = 20 dB, $\Phi$=5 dB and $\Theta$= 1 dB.}
		\label{fig:7}
	\end{minipage}
\vspace*{-0.4cm}
\end{figure*}

\begin{figure*}[t]
	\centering
	\begin{minipage}{.48\textwidth}
		\centering
		\includegraphics[width=8.5cm,height=7cm]{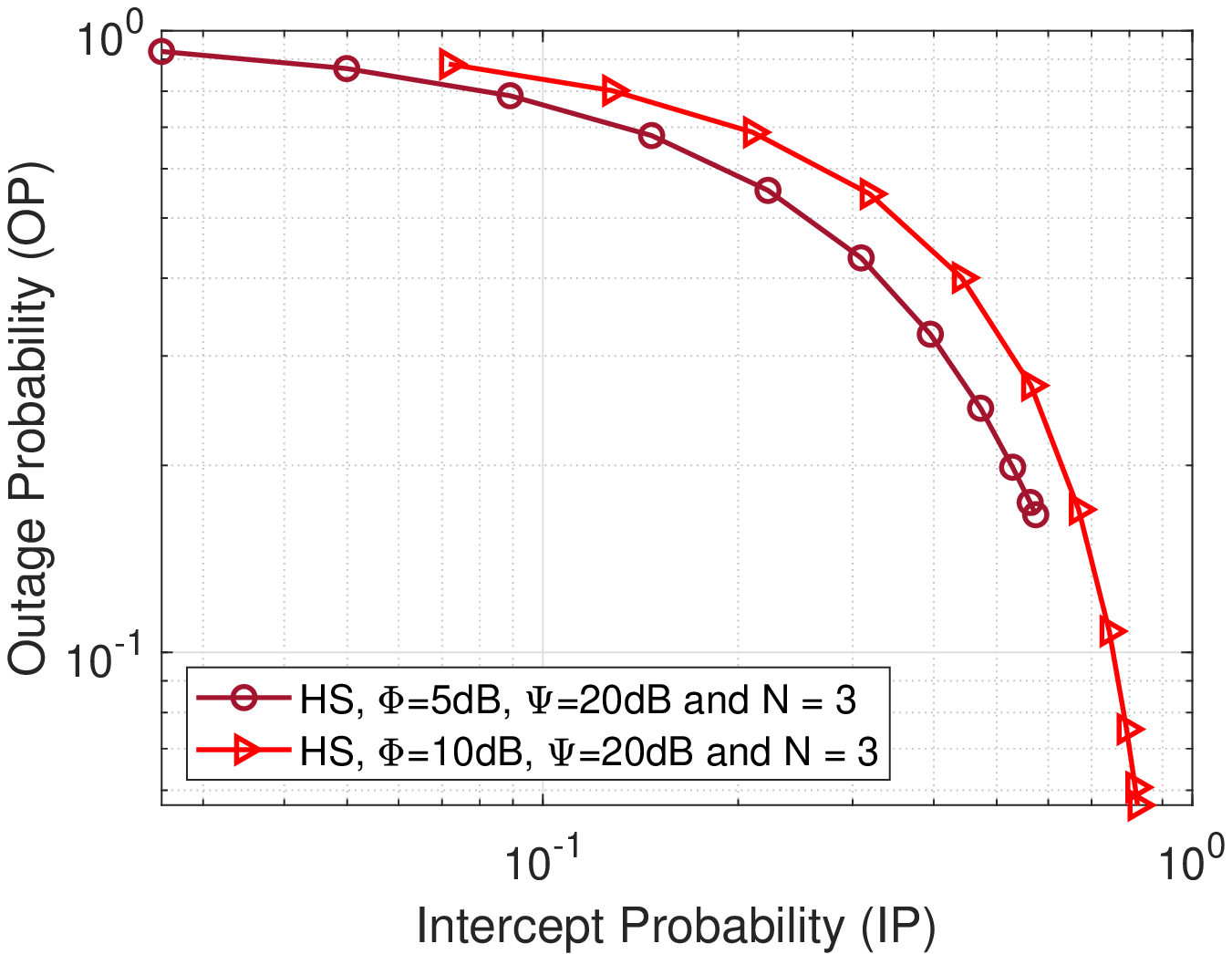}
		\caption{OP versus IP in HS case with $\Psi$ = 20 dB, $\rm C_{th}$ = 1 (bps/Hz), N=3 and $\Theta$=1 dB.}
		\label{fig:8}
	\end{minipage} \hfill
	\begin{minipage}{.48\textwidth}
		\centering
		\includegraphics[width=8.5cm,height=7cm]{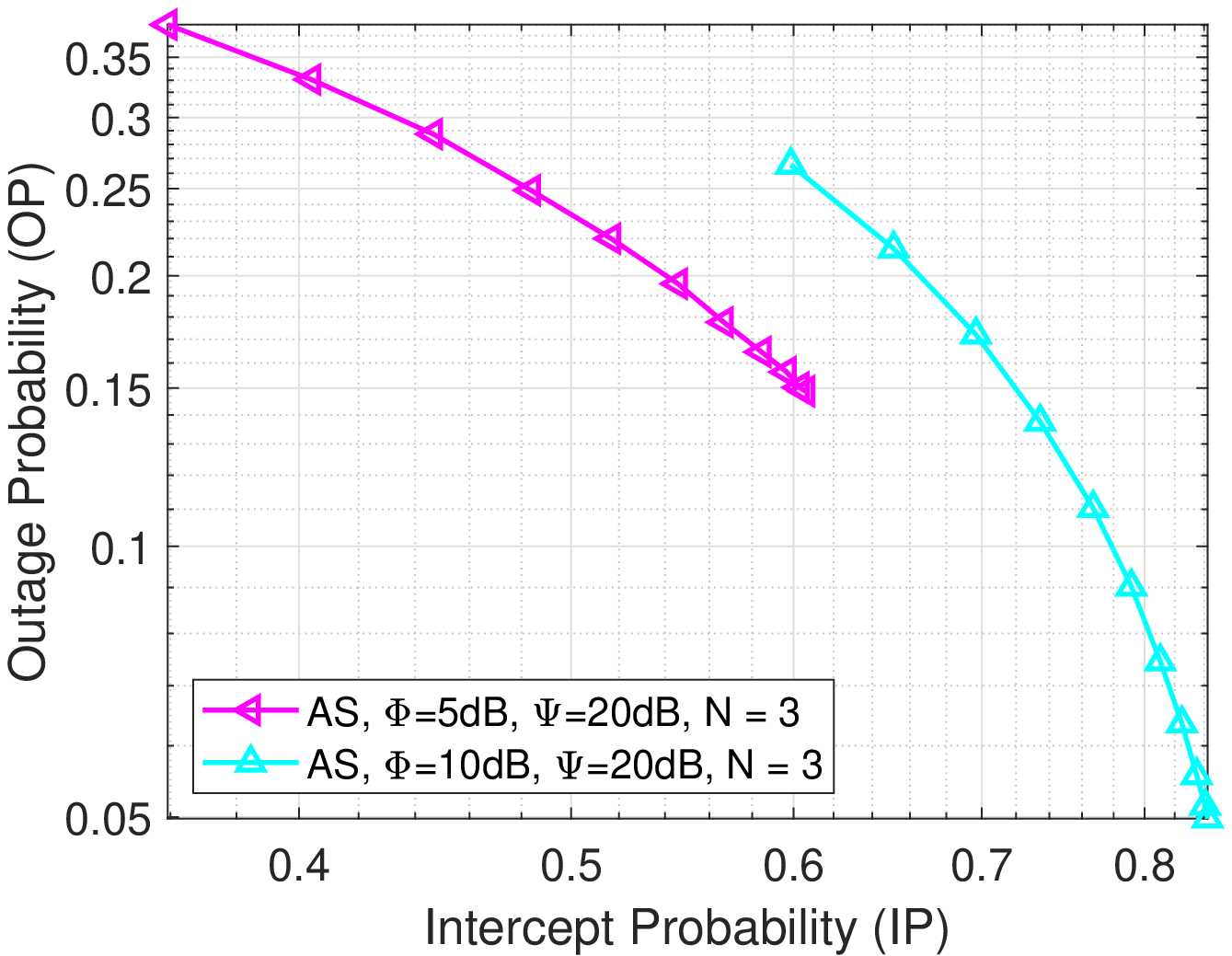}
		\caption{OP versus IP in AS case with $\Psi$ = 20 dB, $\rm C_{th}$ = 1 (bps/Hz), N=3 and $\Theta$=1 dB.}
		\label{fig:9}
	\end{minipage}
\vspace*{-0.4cm}
\end{figure*}

\vspace{-0.2cm}
\section{Numerical Results}
\label{sec:4}

In this section, Monte-Carlo simulations are presented to evaluate the efficiency of the proposed HSTRN and the influences of various parameters on the system performance. The simulation parameters are given in Table \ref{Table_2} \cite{3gpp2019study} and \cite{yue2020outage}. {Unless other stated, we assume the SNR threshold of the system is set as $C_{\rm th}=1$ bps/Hz, (m,b,$\Omega$)=(1,0.0635,0.0007) for heavy shadowing (HS) case and (m,b,$\Omega$)=(5,0.25,0.279) for average shadowing (AS) case}. For
ease of notation and clear observation, we denote methods with CEEs = 0.25-N = 1, perfect CSI-N = 1, CEEs = 0.25-N = 3, perfect CSI-N = 3, imperfect CSI-$\Phi$ = 5 dB-$\Psi$ = 20 dB-N = 3, and imperfect CSI-$\Phi$ = 10 dB-$\Psi$ = 20 dB-N = 3 are proposed method 1 (PM1), proposed method 2 (PM2), proposed method 3 (PM3), proposed method 4 (PM4), proposed method 5 (PM5), and proposed method 6 (PM6), respectively.

In Figs. \ref{fig:2} and \ref{fig:3}, we investigate the influences of satellite's transmit power to the outage and intercept probability with heavy shadowing, respectively. {The setting parameters for Figs. \ref{fig:2} and \ref{fig:3} are given as $\rm C_{th} =$ 1 bps/Hz and $\Phi$=20 dB.} As given in Fig. \ref{fig:2}, the OP is significantly enhanced with a higher value of $\Psi$, i.e., $\Psi$ from 0 to 50 dB. We stress that the value $\Psi$ reaches $50$~dB, which is impractical, but the range is selected to demonstrate the convergence and to have a full vision of the OP and IP.  For example, the OP values of PM1 with high shadowing are 0.9783, 0.6742, and 0.5639 corresponding to $\Psi$ equals 10, 20, and 30 dB, respectively. Moreover, we can observe from Fig. \ref{fig:2} that the OP value is decreased to a saturated value with the increasing of $\Psi$. For example, when $\Psi$ values ranges from $35 \to 50$ dB, the OP values of PM1 and PM3 converge to 0.5495 and 0.2430, respectively. This is because a higher $\Psi$ value, the more the satellite's transmit power is. Therefore, the receiver D has more chance to successfully decode the received signals from relay ${\rm R}_n$. Furthermore, the outage performance is significantly improved with a higher number of relay. This is because the satellite can select a better relay with higher channel gain from ${\rm R}_n \to D$, which can improve the total throughput received at the destination.

In Fig. \ref{fig:3}, we can see that the IP is enhanced with a higher value of $\Psi$. More specifically, the IP values of PM1 with high shadowing (HS) are 0.0119, 0.1077, and 0.2305 corresponding to $\Psi$ values equals 5, 10, and 15 dB, respectively. That is because a better signal-to-noise ratio (SNR) can be obtained at the eavesdropper with higher values of $\Psi$, as shown in Eq. \eqref{EQ12}. It also can be observed that the intercept performance of the proposed schemes with perfect CSI is better than that compared to imperfect CSI with a higher value of $\Psi$. For instance, at $\Psi$ equals 50 dB, the IP values of the PM1 and PM4 are 0.3357 and 0.8442, respectively. This is because the eavesdropper can perfectly decode received signals with perfect CSI, which greatly improves the intercept performance. 

Figs. \ref{fig:4} and \ref{fig:5} illustrate the influences of the satellite's transmit power to the OP and IP with average shadowing (AS) effects, respectively. As observed from Fig. \ref{fig:4}, the outage values of the proposed schemes with perfect CSI, i.e., PM2 and PM4, are better than that as compared to imperfect CSI ones, i.e, PM1 and PM3. Specifically, when $\Psi$ equals 40 dB, the outage values of PM1 and PM3 impose 0.1956 and 0.049, respectively. Moreover, the proposed schemes converge to saturation value faster with higher number of relays, i.e., N from 1 to 3. For example, the PM2 and PM4 schemes converge to saturation value at $\Psi$ equals 40 and 30 dB, respectively. In contrast to Fig. \ref{fig:4}, the intercept performance in Fig. \ref{fig:5} is greatly increased with higher value of satellite's transmit power. Specifically, when $\Psi$ increases from 10 to 20 dB, the IP of the PM1 is 0.5310 and 0.6579, respectively. Similar to Fig. \ref{fig:3}, the intercept performance of proposed schemes with perfect CSI outperforms imperfect ones.\footnote{Concerning the values of the OP, some reach one under certain conditions, and it is, of course, ideal for many practical applications under a finite resource budget. The rationale behind this is to justify the help of relays to enhance security and reliability in satellite-terrestrial networks once sufficient radio resources are available. We stress that there are parameter settings that increase the number of relays the OP dramatically decreases. The OP facilitates approximately 10-fold as the number of relays increases from 1 to 3, as shown in Figs. \ref{fig:2} and \ref{fig:4}.}

In Figs. \ref{fig:6} and \ref{fig:7}, we show the influences of CEEs to the outage and intercept performance with heavy and average showing effects, respectively. {The setting parameters for Figs. \ref{fig:6} and \ref{fig:7} are given as $\rm C_{th} =$ 1 bps/Hz, N=3, $\Psi$=20 dB, $\Phi$=5 dB and $\Theta$=1 dB.} First, we can observe from Fig. \ref{fig:6} that the outage performance of the HS-based schemes is worse than that compared to the AS-based schemes. Specifically, at CEEs equals 0.1, PM5 with HS and PM5 with AS impose 0.1983 and 0.1563, respectively. Moreover, the gap between PM5 with HS and PM5 with AS is larger with higher of CEEs values. For instance, the gap between these schemes are 0.0423, 0.1475, and 0.2354 at CEEs equals 0.1, 0.2, and 0.3, respectively. This shows significant effects of the shadowing effects on network performance. Second, it is also can be seen that the outage performance is enhanced by increasing $\phi$ value. This can be explained by the fact that the higher the $\phi$ value is, the better SNR at the destination can be obtained, as shown in Eq. \eqref{EQ11}. It can be seen from Fig. \ref{fig:7} that the intercept performance is significantly degraded with higher CEEs  values. For instance, when CEEs is between 0.1 and 0.3, the IP value of PM5 with HS is 0.529, 0.3951, and 0.2222, respectively. Moreover, the intercept performance of the HS-based schemes is significantly degraded as compared to AS-based ones, which is in contrast to Fig. \ref{fig:6}.
 
In Figs. \ref{fig:8} and \ref{fig:9}, we investigate the security-reliability trade-off {with parameters are setting as Figs.      \ref{fig:3} and \ref{fig:5}}. It is clear to see that for any specific IP, the OP of PM5 with HS (or AS) is higher than that compared to PM6 with HS (or AS). For example, when IP equals 0.4, the OP of the PM5 with HS is 0.315, while the OP of the PM6 with HS is 0.422. Additionally, Figs. \ref{fig:8} and \ref{fig:9} also show that PM6 scheme with HS obtains the best OP compared to other HS-based and AS-based ones.

\section{Conclusion and Discussion}
\label{sec:5}
This paper has studied the physical layer security in terms of security and reliability trade-off of a satellite amplify-and-forward multi-relay network, consisting of one satellite, multiple relays, one destination in the presence of one eavesdropper. Especially, a jammer has been implemented to create artificial noise in order to eliminate the influence of the eavesdropper on system performance. Based on this system model, the closed-formed expressions of OP and IP were derived by considering generalized Shadowed Rician and Rayleigh fading in the first and second hop, respectively. Then, these mathematical analyses have been validated through simulation results. In particular, the values of the satellite's transmit power, number of relays, and relay transmit power can be selected appropriately to reduce the influences of the eavesdropper.

\vspace*{-0.2cm}
\appendix
\subsection{Proof of Theorem~\ref{theorem_1}}
\label{Appendix:theorem_1}
Based on \eqref{EQ11} and \eqref{EQ19}, the OP can be rewritten as follows:
\begin{align}
	\label{EQ20}
	{\rm{OP}} &= \rm{Pr} \left( {\frac{{\Psi \Phi {\gamma _{{\rm{S}}{{\rm{R}}_a}}}{\gamma _{{{\rm{R}}_a}{\rm{D}}}}}}{{{\gamma _{{{\rm{R}}_a}{\rm{D}}}}\Phi \vartheta_2 + {\gamma _{{\rm{S}}{{\rm{R}}_a}}}\Psi \vartheta_4 + \Xi }} < {\gamma _{th}}} \right)\notag\\
	& = \Pr \left( {\gamma _{{\rm{S}}{{\rm{R}}_a}}}\Lambda_3 < {\gamma _{th}}\left[ {{\gamma _{{{\rm{R}}_a}{\rm{D}}}}\Phi \vartheta_2 + \Xi } \right] \right)
\end{align}
\begin{align*}
  & = \left\{ \begin{array}{l}
		\Pr \left( {\gamma _{{\rm{S}}{{\rm{R}}_a}}} < \frac{\Lambda_2}{\Lambda_3} \right),{\rm{if}}\,{\gamma _{{{\rm{R}}_a}{\rm{D}}}} > \frac{{{\gamma _{th}} \vartheta_4}}{\Phi }\\
		1\hfill,{\rm{if}}\,{\gamma _{{{\rm{R}}_a}{\rm{D}}}} \le \frac{{{\gamma _{th}} \vartheta_4 }}{\Phi }
	\end{array} \right. \notag\\
	& = \int_0^{{{\tilde \gamma }_{th}}} {{f_{{\gamma _{{{\rm{R}}_a}{\rm{D}}}}}}(x)dx}  + \int_{{{\tilde \gamma }_{th}}}^{ + \infty } {{F_{{\gamma _{{\rm{S}}{{\rm{R}}_a}}}}}} \left( \frac{\Lambda_4}{\Lambda_5}  \right)  {f_{{\gamma _{{{\rm{R}}_a}{\rm{D}}}}}}(x)dx,
\end{align*}
where ${\tilde \gamma _{th}} \triangleq  \frac{{{\gamma _{th}}\left( {\mu _{{\rm{RD}}}^2\Phi  + 1} \right)}}{\Phi },$ $\vartheta_4 \triangleq  \left( {\mu _{{\rm{RD}}}^2\Phi  + 1} \right),$ $\Lambda_2 \triangleq \left[ {{\gamma _{{{\rm{R}}_a}{\rm{D}}}}\Phi \vartheta_2 + \Xi } \right]$, $\Lambda_3 \triangleq \Psi \left[ {\Phi \lambda_{\rm R_a D} - {\gamma _{th}}\vartheta_4} \right]$, $\Lambda_4 \triangleq \left[ {x \Phi \vartheta_2 + \Xi } \right]$, $\Lambda_5 \triangleq \Psi \left[ {\Phi x- {\gamma _{th}}\vartheta_4} \right]$. By substituting \eqref{EQ18} into \eqref{EQ20}, the OP can be mathematically represented by
\begin{align}
	\label{EQ21}
	{\rm{OP}} &= 1 + \sum\limits_{n = 1}^N \frac{{{{( - 1)}^n}}}{{n!}} \sum\limits_{}^ *  {{\alpha _{{n_t}}}\frac{{{{( - 1)}^{{k_t}}}{{\left( {1 - {m_{{n_t}}}} \right)}_k}{{\left( {{\delta _{{n_t}}}} \right)}^{{k_t}}}{\lambda _{{\rm{RD}}}}}}{{{k_t}!{p_t}!{{\vartheta_3}^{A_3}}}}}  \notag\\
	& \times \int_{{{\tilde \gamma }_{th}}}^{ + \infty } {{\left( {\frac{{{\gamma _{th}}\left[ {x \vartheta_2 + \frac{\Xi }{\Phi }} \right]}}{{\Psi \left[ {x - {{\tilde \gamma }_{th}}} \right]}}} \right)}^{\sum\nolimits_{t = 1}^n {{p_t}} }} \notag\\ & \times \exp \left( { - \sum\limits_{t = 1}^n {\vartheta_3\left( {\frac{{{\gamma _{th}}\left[ {x \vartheta_2 + \frac{\Xi }{\Phi }} \right]}}{{\Psi \left[ {x - {{\tilde \gamma }_{th}}} \right]}}} \right) - {\lambda _{{\rm{RD}}}}x} } \right) dx.
\end{align}
By denoting $y = x - {\tilde \gamma _{th}}$, \eqref{EQ21} can be further reformulated as
\begin{align}
	\label{EQ22}
	&{\rm{OP}} = 1 + \sum\limits_{n = 1}^N \frac{{{{( - 1)}^n}}}{{n!}}  \sum\limits_{}^ *  {{\alpha _{{n_t}}}\frac{{{{( - 1)}^{{k_t}}}{{\left( {1 - {m_{{n_t}}}} \right)}_{{k_t}}}{{\left( {{\delta _{{n_t}}}} \right)}^{{k_t}}}{\lambda _{{\rm{RD}}}}}}{{{k_t}!{p_t}!{\vartheta_3^{A_3}}}}}  \notag\\
	& \times \int_0^{ + \infty } {{\left( {\frac{{{\gamma _{th}}\left[ {\left( {y + {{\tilde \gamma }_{th}}} \right)\left( {\mu _{{\rm{SR}}}^2\Psi  + 1} \right) + \frac{\Xi }{\Phi }} \right]}}{{\Psi y}}} \right)}^{\sum\nolimits_{t = 1}^n {{p_t}} }} \notag \\
	&\times \exp \left( \begin{array}{l}
		- \sum\limits_{t = 1}^n {\vartheta_3 \left( {\frac{{{\gamma _{th}}\left[ {\left( {y + {{\tilde \gamma }_{th}}} \right) \vartheta_2 + \frac{\Xi }{\Phi }} \right]}}{{\Psi y}}} \right)} \\
		- {\lambda _{{\rm{RD}}}}\left( {y + {{\tilde \gamma }_{th}}} \right)
	\end{array} \right) dy  \notag\\
	& = {\rm{1 + }} \sum\limits_{n = 1}^N \frac{{{{( - 1)}^n}}}{{n!}}\sum\limits_{}^ *  {\alpha _{{n_t}}} \frac{{{{( - 1)}^{{k_t}}}{{\left( {1 - {m_{{n_t}}}} \right)}_{{k_t}}}{{\left( {{\delta _{{n_t}}}} \right)}^{{k_t}}}{\lambda _{{\rm{RD}}}}}}{{{k_t}!{p_t}!{\vartheta_3^{A_3}}}}  \notag \\
	& \times {\left( {\frac{{{\gamma _{th}}}}{\Psi }} \right)^{A_2 }}\exp \left( { - \sum\limits_{t = 1}^n {\vartheta_3 \frac{{{\gamma _{th}}\vartheta_2}}{\Psi }}  - {\lambda _{{\rm{RD}}}}{{\tilde \gamma }_{th}}} \right) \notag\\
	& \times \int_0^{ + \infty } {{{\left( {\vartheta_2 + \frac{{\tilde \Xi }}{y}} \right)}^{\sum\limits_{t = 1}^n {{p_t}} }}}  \exp \left( { - \frac{{\sum\nolimits_{t = 1}^n {\vartheta_3\frac{{{\gamma _{th}}\tilde \Xi }}{\Psi }} }}{y} - {\lambda _{{\rm{RD}}}}y} \right)dy,
\end{align}
where $	\tilde \Xi  = \left[ {{{\tilde \gamma }_{th}}\left( {\mu _{{\rm{SR}}}^2\Psi  + 1} \right) + \frac{\Xi }{\Phi }} \right] $. By applying Binomial Theorem, i.e., ${(x + y)^m} = \sum\limits_{q = 0}^m {\left( \begin{array}{l}
		m\\
		q
	\end{array} \right)} {x^{m - q}}{y^q}$, \eqref{EQ22} can be  rewritten as
\begin{align}
	\label{EQ23}
	{\rm{OP}} &= 1 + \sum\limits_{n = 1}^N \frac{{{{( - 1)}^n}}}{{n!}}  \sum\limits_{}^ *  {{\alpha _{{n_t}}}\frac{{{{( - 1)}^{{k_t}}}{{\left( {1 - {m_{{n_t}}}} \right)}_{{k_t}}}{{\left( {{\delta _{{n_t}}}} \right)}^{{k_t}}}{\lambda _{{\rm{RD}}}}}}{{{k_t}!{p_t}!{\vartheta_3^{A_3}}}}}  \notag \\
	& \times {\left( {\frac{{{\gamma _{th}}}}{\Psi }} \right)^{A_2 }}\exp \left( { - \sum\limits_{t = 1}^n {\vartheta_3\frac{{{\gamma _{th}}\vartheta_2}}{\Psi }}  - {\lambda _{{\rm{RD}}}}{{\tilde \gamma }_{th}}} \right) \notag
\end{align}
\begin{align}
	& \times \sum\limits_{q = 0}^{A_2} {\left( \begin{array}{l}
			A_2\\
			q
		\end{array} \right)} {\vartheta_2^{A_2 - q}}{{\tilde \Xi }^q}\int_0^{ + \infty } {{y^{ - q}}} \notag\\
	 &\exp \left( { - \frac{{\sum\nolimits_{t = 1}^n {\vartheta_3 \frac{{{\gamma _{th}}\tilde \Xi }}{\Psi }} }}{y} - {\lambda _{{\rm{RD}}}}y} \right)dy.
\end{align}
By adopting \cite[Eq.3.471.9]{jeffrey2007table}, the analytical expression of the OP can be obtained in the theorem after doing some algebra.
\vspace{-0.2cm} 
\subsection{Proof of Theorem~\ref{theorem_2}}
\label{Appendix:theorem_2}
Based on the result obtained from Lemma \ref{lemma_2}, by substituting \eqref{EQ31} into \eqref{EQ26}, the IP can be calculated as
\begin{align}
	\label{EQ32}
	{\rm{IP}} &= 2\sum\limits_{n = 1}^N \frac{{{{( - 1)}^{n + 1}}}}{{n!}} \sum\limits_{}^ *  {{\alpha _{{n_t}}}\frac{{{{( - 1)}^{{k_t}}}{{\left( {1 - {m_{{n_t}}}} \right)}_{{k_t}}}{{\left( {{\delta _{{n_t}}}} \right)}^{{k_t}}}{\lambda _{{\rm{RE}}}}{\lambda _{{\rm{JE}}}}}}{{{k_t}!{p_t}!{\vartheta_3^{A_3}}}}} 
	\end{align}
\begin{align*}
	&\times {\left( {\frac{{\sum\nolimits_{t = 1}^n \vartheta_3 {\gamma _{th}}}}{{\Psi {\lambda _{{\rm{RE}}}}}}} \right)^{\frac{{ - q + 1}}{2}}} \exp \left( { - \frac{{\sum\nolimits_{t = 1}^n \vartheta_3 {\gamma _{th}}\vartheta_2}}{\Psi }} \right)
	\notag\\
	 &\times {\left( {\frac{{{\gamma _{th}}}}{\Psi }} \right)^{A_2 }}  \sum\limits_{q = 0}^{\sum\nolimits_{t = 1}^n {{p_t}} } {\left( {\begin{array}{*{20}{c}}
				A_2\\
				q
		\end{array}} \right){\vartheta_2^{A_2 - q}}} \notag \\
	&\times \int_0^{ + \infty } {{\vartheta_6^{\frac{{q + 1}}{2}}}\exp \left( { - {\lambda _{{\rm{JE}}}}x - {\lambda _{{\rm{RE}}}}\Delta (x)} \right)} \notag\\ & \times {K_{ - q + 1}}\left( {2\sqrt {\frac{{\sum\nolimits_{t = 1}^n \vartheta_3 {\gamma _{th}}{\lambda _{{\rm{RE}}}}\vartheta_6}}{\Psi }} } \right)dx.
\end{align*}
Based on \eqref{EQ32}, we have 
\begin{align}
	\label{EQ33}
	&\Delta (x) \vartheta_2 + \Omega (x)  = {{\mathord{\buildrel{\lower3pt\hbox{$\scriptscriptstyle\frown$}} 
				\over \gamma } }_{th}}\vartheta_2 + \frac{{{\gamma _{th}}\Theta x}}{\Phi }\vartheta_2 + \frac{{x\Theta \vartheta_2 + \Lambda }}{\Phi } \notag\\
	&= \vartheta_2 \left( {{{\mathord{\buildrel{\lower3pt\hbox{$\scriptscriptstyle\frown$}} 
					\over \gamma } }_{th}} + \frac{{{\gamma _{th}}\Theta x}}{\Phi } + \frac{{x\Theta }}{\Phi }} \right) + \frac{\Lambda }{\Phi },\\
	\label{EQ34}
	&\exp \left( { - {\lambda _{{\rm{JE}}}}x - {\lambda _{{\rm{RE}}}}\Delta (x)} \right) \notag\\ &= \exp \left( { - {\lambda _{{\rm{JE}}}}x - {\lambda _{{\rm{RE}}}}{{\mathord{\buildrel{\lower3pt\hbox{$\scriptscriptstyle\frown$}} 
					\over \gamma } }_{th}} - \frac{{{\gamma _{th}}{\lambda _{{\rm{RE}}}}\Theta x}}{\Phi }} \right) \notag\\
	& = \exp \left( { - {\lambda _{{\rm{RE}}}}{{\mathord{\buildrel{\lower3pt\hbox{$\scriptscriptstyle\frown$}} 
					\over \gamma } }_{th}}} \right)\exp \left( { - x\left[ {{\lambda _{{\rm{JE}}}} + \frac{{{\gamma _{th}}{\lambda _{{\rm{RE}}}}\Theta }}{\Phi }} \right]} \right).
\end{align}
Next, by denoting $t \triangleq \frac{{x\left[ {\left( {\frac{\Theta }{\Phi } + 1} \right)\vartheta_2{\gamma _{th}}} \right]}}{{\tilde \Lambda }} + 1$, \eqref{EQ33} and \eqref{EQ34} can be respectively rewritten as
\begin{align}
	\label{EQ35}
	&\Delta (x)\vartheta_2 + \Omega (x) = t\tilde \Lambda,\\
	\label{EQ36}
	&\exp \left( { - {\lambda _{{\rm{JE}}}}x - {\lambda _{{\rm{RE}}}}\Delta (x)} \right) \notag\\ &= \exp \left( {\frac{{\tilde \Lambda \left[ {{\lambda _{{\rm{JE}}}} + \frac{{{\gamma _{th}}{\lambda _{{\rm{RE}}}}\Theta }}{\Phi }} \right]}}{{\vartheta_7\vartheta_2{\gamma _{th}}}} - {\lambda _{{\rm{RE}}}}{{\mathord{\buildrel{\lower3pt\hbox{$\scriptscriptstyle\frown$}} 
					\over \gamma } }_{th}}} \right) \notag\\
	& \times \exp \left( { - \frac{{\left[ {{\lambda _{{\rm{JE}}}} + \frac{{{\gamma _{th}}{\lambda _{{\rm{RE}}}}\Theta }}{\Phi }} \right]t\tilde \Lambda }}{{\vartheta_7\vartheta_2{\gamma _{th}}}}} \right),
\end{align}
where $\tilde \Lambda  \triangleq {\tilde \gamma _{th}}\vartheta_2 + \frac{\Lambda }{\Phi },$
$\vartheta_7 \triangleq {\frac{\Theta }{\Phi } + 1}.$ By substituting \eqref{EQ35} and \eqref{EQ36} into \eqref{EQ32}, it yields
\begin{align}
	\label{EQ37}
	{\rm{IP}} &= 2\sum\limits_{n = 1}^N \frac{{{{( - 1)}^{n + 1}}}}{{n!}} {\left( {\frac{{\sum\nolimits_{t = 1}^n \vartheta_3 {\gamma _{th}}}}{{\Psi {\lambda _{{\rm{RE}}}}}}} \right)^{\frac{{ - q + 1}}{2}}}\notag\\ &\sum\limits_{}^ *  {{\alpha _{{n_t}}}\frac{{{{( - 1)}^{{k_t}}}{{\left( {1 - {m_{{n_t}}}} \right)}_{{k_t}}}{{\left( {{\delta _{{n_t}}}} \right)}^{{k_t}}}{\lambda _{{\rm{RE}}}}{\lambda _{{\rm{JE}}}}}}{{{k_t}!{p_t}!{\vartheta_3^{A_3}}}}}   
\end{align}
\begin{align*}
	&\times \exp \left( \Lambda_{10} \tilde \Lambda  - {\lambda _{{\rm{RE}}}}{{\mathord{\buildrel{\lower3pt\hbox{$\scriptscriptstyle\frown$}} 
				\over \gamma } }_{th}} - \frac{{\sum\nolimits_{t = 1}^n \vartheta_3 {\gamma _{th}}\vartheta_2}}{\Psi } \right)  \notag\\
	&\times {\left( {\frac{{{\gamma _{th}}}}{\Psi }} \right)^{A_2 }} \sum\limits_{q = 0}^{A_2 } {\left( {\begin{array}{*{20}{c}}
				A_2\\
				q
		\end{array}} \right)}  \frac{{{\vartheta_2^{A_2 - q}}{{ {\tilde \Lambda } }^{\frac{{q + 3}}{2}}}}}{{\vartheta_7\vartheta_2{\gamma _{th}}}} \notag\\
	& \times \int_1^{ + \infty } {{{\left( t \right)}^{\frac{{q + 1}}{2}}}\exp \left( { - \Lambda_{10} t\tilde \Lambda  } \right)}  \notag\\ &\times {K_{ - q + 1}}\left( {2\sqrt {\frac{{\sum\nolimits_{t = 1}^n \vartheta_3 {\gamma _{th}}{\lambda _{{\rm{RE}}}}t\tilde \Lambda }}{\Psi }} } \right)dt.
\end{align*}
where $\Lambda_{10} \triangleq \frac{{\left[ {{\lambda _{{\rm{JE}}}} + \frac{{{\gamma _{th}}{\lambda _{{\rm{RE}}}}\Theta }}{\Phi }} \right] }}{{\vartheta_7\vartheta_2{\gamma _{th}}}}$. Next, we apply the Maclaurin series as follows:
\begin{align}
	\label{EQ38}
	&\exp \left( { - \frac{{\left[ {{\lambda _{{\rm{JE}}}} + \frac{{{\gamma _{th}}{\lambda _{{\rm{RE}}}}\Theta }}{\Phi }} \right]t\tilde \Lambda }}{{\vartheta_7\vartheta_2{\gamma _{th}}}}} \right) \notag\\ &= \sum\limits_{w = 0}^\infty  {\frac{{{{( - 1)}^w}{{\left( {\frac{{\left[ {{\lambda _{{\rm{JE}}}} + \frac{{{\gamma _{th}}{\lambda _{{\rm{RE}}}}\Theta }}{\Phi }} \right]\tilde \Lambda }}{{\vartheta_7\vartheta_2{\gamma _{th}}}}} \right)}^w}{t^w}}}{{w!}}}.
\end{align}
By substituting \eqref{EQ38} into \eqref{EQ37}, the IP can be re-calculated as 
\begin{align}
	\label{EQ39}
	{\rm{IP}} &= 2\sum\limits_{n = 1}^N \frac{{{{( - 1)}^{n + 1}}}}{{n!}} {\left( {\frac{{\sum\nolimits_{t = 1}^n \vartheta_3 {\gamma _{th}}}}{{\Psi {\lambda _{{\rm{RE}}}}}}} \right)^{\frac{{ - q + 1}}{2}}} \notag\\&\sum\limits_{}^ *  {{\alpha _{{n_t}}}\frac{{{{( - 1)}^{{k_t}}}{{\left( {1 - {m_{{n_t}}}} \right)}_{{k_t}}}{{\left( {{\delta _{{n_t}}}} \right)}^{{k_t}}}{\lambda _{{\rm{RE}}}}{\lambda _{{\rm{JE}}}}}}{{{k_t}!{p_t}!{\vartheta_3^{A_3}}}}} \notag\\
	& \times \exp \left( \tilde \Lambda \Lambda_{10} - {\lambda _{{\rm{RE}}}}{{\mathord{\buildrel{\lower3pt\hbox{$\scriptscriptstyle\frown$}} 
				\over \gamma } }_{th}} - \frac{{\sum\nolimits_{t = 1}^n \vartheta_3 {\gamma _{th}}\vartheta_2}}{\Psi } \right) \notag\\
	& \times {\left( {\frac{{{\gamma _{th}}}}{\Psi }} \right)^{A_2 }} \sum\limits_{q = 0}^{A_2 } {\left( {\begin{array}{*{20}{c}}
				{\sum\limits_{t = 1}^n {{p_t}} }\\
				q
		\end{array}} \right)}  \frac{{{\vartheta_2^{A_2 - q}}{{\left( {\tilde \Lambda } \right)}^{\frac{{q + 3}}{2}}}}}{{\vartheta_7\vartheta_2{\gamma _{th}}}} \notag\\
	& \times \sum\limits_{w = 0}^\infty  \frac{{{{( - 1)}^w}{{\left( \Lambda_{10} \tilde \Lambda  \right)}^w}}}{{w!}}  \int_1^{ + \infty } {{{\left( t \right)}^{\frac{{q + 1 + 2w}}{2}}}} \notag\\ & \times {K_{ - q + 1}}\left( {2\sqrt {\frac{{\sum\nolimits_{t = 1}^n \vartheta_3 {\gamma _{th}}{\lambda _{{\rm{RE}}}}t\tilde \Lambda }}{\Psi }} } \right)dt.
\end{align}
After computing the integral in \eqref{EQ39} and doing some algebra, we obtain the result as shown in the theorem.

\bibliographystyle{IEEEtran}
\bibliography{IEEEfull}

\vspace{-0.5cm}
\begin{IEEEbiographynophoto}
	{Tan N. Nguyen} (nguyennhattan@tdtu.edu.vn) was born in 1986 in Nha Trang City, Vietnam. He received a BS degree in electronics in 2008 from Ho Chi Minh University of Natural Sciences and an MS degree in telecommunications engineering in 2012 from Vietnam National University. He received a PhD in communications technologies in 2019 from the Faculty of Electrical Engineering and Computer Science at VSB – Technical University of Ostrava, Czech Republic. He joined the Faculty of Electrical and Electronics Engineering of Ton Duc Thang University, Vietnam, in 2013, and since then as been lecturing. His major interests are cooperative communications, cognitive radio, and physical layer security.
\end{IEEEbiographynophoto}
\begin{IEEEbiographynophoto}
	{Dinh-Hieu Tran} (S'20) was born in 1989 in Vietnam, growing up up in Gia Lai. He received a BE degree in Electronics and Telecommunication Engineering Department from Ho Chi Minh City University of Technology, Vietnam, in 2012. In 2017, he completed an MS degree with honours in Electronics and Computer Engineering at Hongik University, South Korea. He obtained a PhD at the Interdisciplinary Centre for Security, Reliability and Trust (SnT), University of Luxembourg, under the supervision of Prof. Symeon Chatzinotas and Prof. Bj\"orn Ottersten. His research interests include UAVs, IoTs, mobile edge computing, caching, backscatter, B5G for wireless communication networks. He is a recipient of the IS3C 2016 best paper award.
\end{IEEEbiographynophoto}
\begin{IEEEbiographynophoto} 
	{Trinh Van Chien} (S'16-M'20) received the B.S. degree in Electronics and Telecommunications from Hanoi University of Science and Technology (HUST), Vietnam, in 2012. He then received the M.S. degree in Electrical and Computer Enginneering from Sungkyunkwan University (SKKU), Korea, in 2014 and the Ph.D. degree in Communication Systems from Link\"oping University (LiU), Sweden, in 2020. He was  a research associate at University of Luxembourg. He is now with the School of Information and Communication Technology (SoICT), Hanoi University of Science and Technology (HUST), Vietnam. His interest lies in convex optimization problems and machine learning applications for wireless communications and image \& video processing. He was an IEEE wireless communications letters exemplary reviewer for 2016, 2017 and 2021. He also received the award of scientific excellence in the first year of the 5Gwireless project funded by European Union Horizon's 2020.
\end{IEEEbiographynophoto}
\begin{IEEEbiographynophoto}
	{Van-Duc Phan} (duc.pv@vlu.edu.vn) was born in 1975 in  Long An province, Vietnam. He received his MS degree from the Department of Electric, Electrical and Telecommunications Engineering at Ho Chi Minh City University of Transport in Vietnam and then in 201 a PhD from the Department of Mechanical and Automation Engineering, Da-Yeh University, Taiwan. His current research interests are sliding mode controls, non-linear systems and active magnetic bearings, flywheel energy storage systems, power system optimization, optimization algorithms, renewable energies, energy harvesting (EH) enabled cooperative networks, optical property improvement, lighting performance of white LEDs, energy efficiency LED driver integrated circuits, novel radio access technologies, and physical security in communications networks.
\end{IEEEbiographynophoto}
\begin{IEEEbiographynophoto}
	{Miroslav Voznak} (M'09-SM'16) received
his PhD in telecommunications in 2002 from the Faculty of Electrical Engineering and Computer Science at VSB – Technical University of Ostrava, and achieved habilitation in 2009. He was appointed Full Professor in Electronics and Communications Technologies in 2017. His research interests generally focus on ICT, especially quality of service and experience, network security, wireless networks, and big data analytics. He has authored and co-authored over one hundred articles indexed in SCI/SCIE journals. According to the Stanford University study released in 2020, he is one of the World’s Top 2\% of scientists in Networking \& Telecommunications and Information \& Communications Technologies. He served as a general chair of the $11^{th}$ IFIP Wireless and Mobile Networking Conference in 2018 and the $24^{th}$ IEEE/ACM International Symposium on Distributed Simulation and Real Time Applications in 2020. He participated in six projects funded by EU in programs managed directly by European Commission. Currently, he is a principal investigator in the research project QUANTUM5 funded by NATO, which focuses on the application of quantum cryptography in 5G campus networks.
\end{IEEEbiographynophoto}
\begin{IEEEbiographynophoto}
	{Symeon Chatzinotas}, (S'06-M'09-SM'13) is currently Full Professor / Chief Scientist I in Satellite Communications and Head of the SIGCOM Research Group at SnT, University of Luxembourg. He coordinates research activities in communications and networking, acting as a PI in over 20 projects and is the main representative for 3GPP, ETSI, DVB.
In the past, he worked as a Visiting Professor at the University of Parma, Italy, lecturing on 5G Wireless Networks. He was involved in numerous R$\&$D projects for NCSR Demokritos, CERTH Hellas and CCSR, University of Surrey.
He was co-recipient of the 2014 IEEE Distinguished Contributions to Satellite Communications Award and Best Paper Awards at EURASIP JWCN, CROWNCOM, ICSSC. He has (co-)authored more than 450 technical papers in refereed international journals, conferences and scientific books.
He is currently on the editorial board of the IEEE Transactions on Communications, IEEE Open Journal of Vehicular Technology, and the International Journal of Satellite Communications and Networking. 
\end{IEEEbiographynophoto}
\end{document}